\newcommand{\sub}{\subseteq}
\newcommand{\la}{\langle}
\newcommand{\ra}{\rangle}
\newcommand{\A}{\mathcal{A}}
\newcommand{\Z}{\mathbb{Z}}
\newcommand{\Q}{\mathbb{Q}}
\newcommand{\G}{\mathcal{G}}
\newcommand{\h}{\mathcal{H}}
\newcommand{\s}{\mathcal{S}}
\newcommand{\sg}{\mathrm{sg}}
\newcommand{\GL}{\mathrm{GL}(2,\Z)}
\newcommand{\Heis}{\mathrm{H}}
\renewcommand{\phi}{\varphi}
\renewcommand{\epsilon}{\varepsilon}
\title{On Reachability Problems for Low-Dimensional Matrix Semigroups}
\author{Thomas Colcombet}{IRIF, CNRS, Universit\'{e} Paris Diderot, France}{thomas.colcombet@irif.fr}{https://orcid.org/0000-0001-6529-6963}{Supported by the
		European Research Council (ERC) under the European Union’s
		Horizon 2020 research and innovation programme
		(grant agreement No.670624),
		and by the DeLTA ANR project (ANR-16-CE40-0007).}
\author{Jo\"{e}l Ouaknine}{The Max Planck Institute for Software Systems, Germany \and Department of Computer Science, University of Oxford, United Kingdom}{joel@mpi-sws.org}{https://orcid.org/0000-0003-0031-9356}{Supported by ERC grant AVS-ISS (648701) and by DFG grant 389792660 as part of TRR~248 (see \url{https://perspicuous-computing.science}).}
\author{Pavel Semukhin}{Department of Computer Science, University of Oxford, United Kingdom}{pavel.semukhin@cs.ox.ac.uk}{https://orcid.org/0000-0002-7547-6391}{Supported by ERC grant AVS-ISS (648701).}
\author{James Worrell}{Department of Computer Science, University of Oxford, United Kingdom}{jbw@cs.ox.ac.uk}{https://orcid.org/0000-0001-8151-2443}{Supported by EPSRC Fellowship EP/N008197/1.}
\authorrunning{T. Colcombet, J. Ouaknine, P. Semukhin, and J. Worrell}
\keywords{Membership Problem, Half-Space Reachability Problem, matrix semigroups, Heisenberg group, general
linear group}
\begin{document}
\maketitle

\begin{abstract}
  We consider the Membership and the Half-Space Reachability problems
  for matrices in dimensions two and three. Our first
  main result is that the Membership Problem is decidable for
  finitely generated sub-semigroups of 
  the Heisenberg group over rational numbers.
  Furthermore, we prove two decidability results for the Half-Space
  Reachability Problem. Namely, we show that this problem is decidable
  for sub-semigroups of $\mathrm{GL}(2,\mathbb{Z})$ and of the Heisenberg
  group over rational numbers.
\end{abstract}

\section{Introduction}

The algorithmic theory of matrix groups and semigroups is a staple of
computational algebra~\cite{Beals99} with numerous applications to
automata theory and program
analysis~\cite{BlondelJKP05,CK2005,DerksenJK05,HrushovskiOP018,KL86,MandelS77}
and has been influential in developing the notion of interactive
proofs in complexity theory~\cite{Babai85}.

Two central decision problems on matrix semigroups are the
\emph{Membership} and \emph{Half-Space Reachability} (see,
e.g.,~\cite{BP2012}).  For the Membership Problem the input is a
finite set of generators $A_1,\ldots,A_k$ and a target matrix $A$,
with all matrices being square and of the same dimension.  The
question is whether $A$ lies in the semigroup generated by
$A_1,\ldots,A_k$.  We emphasize that we consider membership in
finitely generated \emph{sub-semigroups}, i.e., we seek to recover $A$
as a non-empty product of generators.  In a related \emph{subgroup}
membership problem one additionally allows to take inverses of
generators.  The subgroup membership can clearly be reduced to
the sub-semigroup membership and tends to be more tractable (e.g.,
the subgroup membership for polycyclic groups is well-known to
be decidable~\cite{Sims94}, and the subgroup membership for the modular group
$\mathrm{PSL}(2,\mathbb{Z})$ is in PTIME~\cite{Gurevich2007}).
For the Half-Space Reachability
Problem the target matrix is replaced by vectors $\boldsymbol u$,
$\boldsymbol v$ and a scalar $\lambda$, and the question is now
whether there exists a matrix $A$ in the semigroup generated by
$A_1,\ldots,A_k$ such that
$\boldsymbol u^\top A \boldsymbol v \geq \lambda$.  Geometrically the
question is whether the orbit of $\boldsymbol v$ under the action of
the semigroup reaches a certain half-space with normal
$\boldsymbol u$. Closely related to these problems are the Vector
Reachability and the Hyperplane Reachability\footnote{In the
  literature the Hyperplane Reachability Problem is also called the
  Scalar Reachability Problem.} problems, which ask whether there
exists a matrix $A$ in the semigroup generated by $A_1,\ldots,A_k$
such that $A \boldsymbol v = \boldsymbol u$ or such that
$\boldsymbol u^\top A \boldsymbol v = \lambda$, respectively.

Undecidability of the Membership Problem has long been known (indeed,
this was one of the earliest undecidability results---see
A.~Markov~\cite{Markov}).  Subsequently a number of positive
decidability results were obtained in the case of semigroups generated
by commuting matrices over infinite fields~\cite{Babai,KL86}.  More
recently, attention has focussed on integer matrices in dimension two.
A classical result of \cite{CK2005} shows decidability of the
Membership Problem for sub-semigroups of
$\mathrm{GL}(2,\mathbb{Z})$---the group of $2\times 2$ integer
matrices with integer inverses (equivalently, with determinants equal
to $\pm 1$). Moreover, the semigroup membership for the identity matrix was shown to be
NP-complete for $\mathrm{SL}(2,\mathbb{Z})$~\cite{BHP17}.
Furthermore, the Membership Problem is decidable for $2\times 2$
integer matrices with nonzero determinant~\cite{PS_SODA} and for
$2\times 2$ integer matrices with determinants equal to $0$ and $\pm
1$~\cite{PS17}. However it is still unknown whether the Membership
Problem is decidable for all $2\times 2$ integer matrices.

Going beyond dimension two, it has long been known that the Membership
Problem is undecidable for general $3\times 3$ integer
matrices~\cite{Pat70}.  However the status of the Membership Problem
for $\mathrm{GL}(3,\mathbb{Z})$ is currently an outstanding open
problem.  Related to this, it was shown in~\cite{KNP18} that for a
two-element alphabet $\Sigma$, the monoid $\Sigma^*\times \Sigma^*$
cannot be embedded in $\mathrm{GL}(3,\mathbb{Z})$.  This fact suggests
that undecidability proofs of the Membership Problem in other settings
(such as~\cite{Pat70}), which are based on encodings of the Post
Correspondence Problem, are unlikely to carry over to
$\mathrm{GL}(3,\mathbb{Z})$.  It is classical that the Membership
Problem for $\mathrm{GL}(4,\mathbb{Z})$ is
undecidable~\cite{Mih58,Identity,KNP18}; thus it can reasonably be said that
dimension three lies on the borderline between decidability and
undecidability.

Our first main result (Theorem \ref{thm:heis}) concerns the Membership
Problem for a simple subgroup of $\mathrm{GL}(3,\mathbb{Z})$: the
so-called \emph{Heisenberg group} $\mathrm{H}(3,\mathbb{Z})$, which
comprises upper triangular integer matrices with ones along the diagonal.
Since the Heisenberg group is polycyclic, the sub\emph{group}
membership problem is decidable~\cite{Sims94}.  It was moreover
recently shown in~\cite{KNP18} how to decide membership of the
identity matrix in finitely generated sub-semigroups of
$\mathrm{H}(3,\mathbb{Z})$.  Our main theorem strengthens this last
result to show decidability of the Membership Problem for
$\mathrm{H}(3,\mathbb{Z})$. In fact, like in \cite{KNP18}, our argument works
for Heisenberg groups of any dimension and even over the field of rational numbers, that is,
for $\mathrm{H}(n,\mathbb{Q})$.

Our proof relies on arguments developed in~\cite{KNP18} but contains several significant new
elements, including the use of linear programming, integer register automata, and matrix logarithms.
The following algebraic property of $\mathrm{H}(3,\mathbb{Z})$ is important for our construction: the
subgroup generated by commutators of matrices from a given subset $\mathcal{G}\subseteq
\mathrm{H}(3,\mathbb{Z})$ is isomorphic to a subgroup of $\mathbb{Z}$. Such property does not hold
for the direct product of two Heisenberg groups $\mathrm{H}(3,\mathbb{Z})^2$ or for the group of
$4\times 4$ upper unitriangular matrices $\mathrm{UT}(4,\mathbb{Z})$. This makes it challenging to
generalize our argument to show decidability of the Membership Problem for
$\mathrm{H}(3,\mathbb{Z})^2$, $\mathrm{UT}(4,\mathbb{Z})$ or other similar matrix groups.

In~\cite{KLZ15} a related
problem was studied, called the Knapsack Problem. Namely, it was
proved that the Knapsack Problem is decidable for
$\mathrm{H}(3,\mathbb{Z})$, that is, given matrices $A_1,\ldots,A_k$
and $A$ from $\mathrm{H}(3,\mathbb{Z})$ one can decide whether there
are non-negative integers $n_1,\ldots,n_k$ such that
$A_1^{n_1}\cdots A_k^{n_k}=A$.  Decidability of the Knapsack Problem
is shown by reduction to the problem of solving a single quadratic
equation in integer numbers (proved to be decidable
in~\cite{GS81,SG04}).  By contrast, our decision procedure for the
Membership Problem relies only on linear programming and integer
linear arithmetic.  As far as we can tell, there is no straightforward
reduction in either direction between the Membership and Knapsack
Problems for $\mathrm{H}(3,\mathbb{Z})$.

The Vector Reachability, Hyperlane Reachability, and Half-Space
Reachabilty Problems are all known to be undecidable in general
(see~\cite{CassaigneHHN14,HH07,HHH2007}).
The Vector and Hyperplane Reachability problems are known to be decidable for
$\mathrm{GL}(2,\mathbb{Z})$, as shown in~\cite{PS19}. 
For matrix semigroups with a single generator,
the Half-Space Reachability Problem is equivalent to the
\emph{Positivity Problem} for linear recurrence sequences: a
longstanding and apparently difficult open
problem~\cite{OW14:SODA,RS94}.  Our second main result is that the
Half-Space Reachability Problem is decidable for both
$\mathrm{GL}(2,\mathbb{Z})$ (Theorem \ref{thm:halfpl}) and
$\mathrm{H}(n,\mathbb{Q})$ (Theorem \ref{thm:heis_half}).  For
$\mathrm{GL}(2,\mathbb{Z})$ we build on automata-theoretic techniques
developed in~\cite{CK2005}, with the key insight being that the set of
matrices in $\mathrm{GL}(2,\mathbb{Z})$ with a positive value in a given entry
can be represented as a regular language over the
generators of $\mathrm{GL}(2,\mathbb{Z})$.  For
$\mathrm{H}(n,\mathbb{Q})$ we rely on a nontrivial
result about the nonnegativity of quadratic forms over the integers from~\cite{GS81,SG04} (related
to the result used in~\cite{KLZ15} to solve the Knapsack Problem).

\section{Preliminaries}
\paragraph*{The Heisenberg Group.}
We use notations $I_n$ and $0_n$ for the identity matrix and for the zero matrix
of size $n\times n$, respectively. For $n\geq 3$, the \emph{Heisenberg group} of
dimension $n$ is the group $\Heis(n,\mathbb{R})$ of $n\times n$ real matrices of
the form
\begin{gather}
  A = \begin{pmatrix} 1 & \boldsymbol a^\top & c \\ 0& I_{n-2} &\boldsymbol b\\ 0&0&1 \end{pmatrix},
\label{eq:heisenberg}
\end{gather}
where $\boldsymbol a,\boldsymbol b \in \mathbb{R}^{n-2}$, $c\in \mathbb{R}$.
For brevity, we will often denote a matrix $A$ as in~\eqref{eq:heisenberg} by
the triple $(\boldsymbol a,\boldsymbol b,c) \in \mathbb{R}^{n-2} \times
\mathbb{R}^{n-2} \times \mathbb{R}$.  It is easy to check that the product
operation is given by
\[
  (\boldsymbol a,\boldsymbol b,c)\cdot (\boldsymbol a',\boldsymbol b',c') =
  (\boldsymbol a + \boldsymbol a',\boldsymbol b + \boldsymbol b',c+c' +
  \boldsymbol a^\top\boldsymbol b')\,.
\]
We use $\psi$ to denote the group homomorphism
$\psi:\Heis(n,\mathbb{R})\rightarrow \mathbb{R}^{2n-4}$ given by
$\psi(\boldsymbol a,\boldsymbol b,c)=(\boldsymbol a,\boldsymbol b)$.

The Heisenberg group $\Heis(n,\mathbb{R})$ is a Lie group whose
corresponding Lie algebra $\mathfrak{h}(n,\mathbb{R})$ comprises the vector
space of $n\times n$ real matrices of the form
\begin{gather} B= \begin{pmatrix} 0 & \boldsymbol a^\top & c \\ 0& 0_{n-2} &\boldsymbol b\\ 0&0&0 \end{pmatrix}, 
\label{eq:lie-alg}
\end{gather}
where $\boldsymbol a,\boldsymbol b \in \mathbb{R}^{n-2}$ and $c\in
\mathbb{R}$, together with the binary \emph{Lie bracket} operation
$[A,B]:=AB-BA$ for $A,B \in \mathfrak{h}(n,\mathbb{R})$. Note that $[A,B]$
has only zero entries except for the $(1,n)$-entry. From this it is easy to
check that $[[A,B],C]=0_n$ for all $A,B,C \in \mathfrak{h}(n,\mathbb{R})$.

Given $A \in \Heis(n,\mathbb{R})$, as shown in~\eqref{eq:heisenberg}, we
define its logarithm $\log(A) \in\mathfrak{h}(n,\mathbb{R})$ to be
\[ \log(A) := (A-I) - \frac{(A-I)^2}{2} =
  \begin{pmatrix} 0 & \boldsymbol a^\top & c-\textstyle\frac{1}{2}\boldsymbol a^\top
      \boldsymbol b \\ 0& 0_{n-2} &\boldsymbol
    b\\ 0&0&0 \end{pmatrix} . \] Conversely, given $B \in
  \mathfrak{h}(n,\mathbb{R})$, as shown in~\eqref{eq:lie-alg}, we
  define its exponential $\exp(B) \in \Heis(n,\mathbb{R})$ to
  be $\exp(B) := I + B +\frac{B^2}{2} = (\boldsymbol a, \boldsymbol b,
  c+\textstyle\frac{1}{2}\boldsymbol a^\top \boldsymbol b)$. It is easy to
  verify that $\log$ and $\exp$ are mutually inverse and together induce a
  bijection between $\Heis(n,\mathbb{R})$ and $\mathfrak{h}(n,\mathbb{R})$.

The following is a specialisation to $ \Heis(n,\mathbb{R})$
of the Baker-Campbell-Hausdorff product formula (see~\cite[Chapter 5]{hall2015} for a details).
Given a sequence of matrices $B_1,\ldots,B_m \in \Heis(n,\mathbb{R})$, we have
\begin{gather}
\log(B_1 \cdots B_m) = \sum_{i=1}^m \log(B_i) + 
\frac{1}{2} \sum_{1\leq i<j\leq m} [\log(B_i),\log(B_j)] \, . 
\label{eq:baker-campbell-hausdorff}
\end{gather}

\paragraph*{Regular subsets of $\GL$.}
We will use the notation $\GL$ for the general linear group of $2\times 2$ integer matrices, that is,
$
\GL=\{M\in \Z^{2\times 2} : \det(M)=\pm 1\}
$.
A matrix is called \emph{singular} if its determinant is zero and \emph{nonsingular} otherwise.

We will use the following encoding of the matrices from $\GL$ by words
in alphabet $\Sigma=\{X,N,S,R\}$. First, we define a mapping
$\phi: \Sigma\to \GL$ as follows:
\[
\phi(X)=-I_2=\begin{pmatrix} -1 & 0\\ 0 & -1\end{pmatrix}\!,\ 
\phi(N)=\begin{pmatrix} 1 & 0\\ 0 & -1\end{pmatrix}\!,\ 
\phi(S)=\begin{pmatrix} 0 & -1\\ 1 & 0\end{pmatrix}\!,\ 
\phi(R)=\begin{pmatrix} 0 & -1\\ 1 & 1\end{pmatrix}\!.
\]
We can extend $\phi$ to a morphism $\phi: \Sigma^* \to \GL$ in a
natural way. It is a well-known fact that morphism $\phi$ is
surjective, that is, for every $M\in \GL$ there is a word
$w\in \Sigma^*$ such that $\phi(w)=M$.
This presentation is not unique because of identities such as
$\phi(\mathit{SS})=\phi(\mathit{RRR})=\phi(X)$. However, as explained
below, every matrix $M\in \GL$ is represented by a unique word in the
\emph{canonical} form.

In the following definition, for $n$ a positive integer and
$V\in \Sigma$, $V^n$ is the word consisting of $n$ copies of $V$,
while $V^0$ denotes the empty word.
\begin{definition}
A word $w\in \Sigma^*$ is called \emph{canonical} if it has the form
\[
w=N^\delta X^\gamma S^\beta R^{\alpha_1}SR^{\alpha_2}\cdots SR^{\alpha_n}S^\epsilon,
\]
where $\beta,\gamma, \delta, \epsilon\in \{0,1\}$ and $\alpha_i\in \{1,2\}$ for $i=1,\ldots,n$. In other words, $w$ is \emph{canonical} if it does not contain subwords $SS$ or $RRR$. Moreover, letter $N$ may appear only once in the first position, and letter $X$ may appear only once either in the first position or after $N$.
\end{definition}

The next proposition is a well-known fact.

\begin{proposition}[\cite{LS,MKS,PS_SODA,Ran}]\label{prop:can}
For every matrix $M\in \GL$, there is a unique canonical word $w$ such that $M=\phi(w)$.
\end{proposition}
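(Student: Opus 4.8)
The plan is to establish the two halves separately: first that every matrix $M\in\GL$ can be written as $\phi(w)$ for some canonical $w$, and then that such a $w$ is unique. For the existence part, I would start from the well-known fact (already quoted in the excerpt) that $\phi$ is surjective, so $M=\phi(u)$ for some $u\in\Sigma^*$, and then apply a confluent rewriting procedure that pushes $u$ into canonical form. The rewriting rules come from the defining relations of $\GL$ as generated by $\{X,N,S,R\}$: namely $SS\to X$, $RRR\to X$, $XX\to\epsilon$, together with commutation/normalisation rules that move any occurrence of $N$ to the front and any occurrence of $X$ to the front (just after a possible $N$) — e.g.\ $XS\to SX$, $XR\to \cdots$, $NS\to\cdots N$, $NR\to\cdots N$, each obtained by computing the relevant $2\times2$ products. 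One checks each rule preserves the value under $\phi$, and that iterating them terminates (a suitable well-founded measure: first the number of $N$'s and $X$'s not in their designated prefix positions, then word length) and lands in a word with no $SS$, no $RRR$, at most one $N$ (in front) and at most one $X$ (in front or right after $N$) — i.e.\ a canonical word. Extracting the $S^\beta R^{\alpha_1}S\cdots SR^{\alpha_n}S^\epsilon$ shape from "no $SS$, no $RRR$" is then a routine syntactic observation.

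For uniqueness I would argue that two distinct canonical words cannot map to the same matrix. The cleanest route is to use the classical structure of $\GL$: it is the amalgamated free product (or, passing to $\mathrm{PSL}(2,\Z)$, the free product $C_2 * C_3$) with the images of $S$ and $R$ playing the roles of the torsion generators, and $X=-I_2$ generating the centre. Modding out by the centre, a canonical word of the form $S^\beta R^{\alpha_1}S R^{\alpha_2}\cdots S R^{\alpha_n}S^\epsilon$ with $\alpha_i\in\{1,2\}$ and no $SS$, $RRR$ is exactly a reduced word in $C_2*C_3 = \langle \bar S\rangle * \langle \bar R\rangle$ (alternating between the nontrivial elements $\bar S$ of order $2$ and $\bar R,\bar R^2$ of order $3$), so by the normal form theorem for free products two such words representing the same element of $\mathrm{PSL}(2,\Z)$ must be letter-for-letter equal. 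Lifting back, the $N$ and $X$ prefixes are then pinned down by the determinant ($\det\phi(N)=-1$ distinguishes $\delta$) and by the sign ambiguity $M$ vs.\ $-M$ ($\gamma$), so the full canonical word is unique. Alternatively one can avoid citing the free-product structure and instead run a direct ping-pong / continued-fraction argument: track how $\phi(SR^{\alpha_1}\cdots)$ acts on a suitable region of the projective line (or on positive vs.\ negative cones in $\Z^2$), showing the sequence $(\alpha_i)$ is recoverable from $M$.

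The main obstacle is the uniqueness half, and specifically justifying that "no $SS$, no $RRR$" really is the right notion of reduced word — i.e.\ correctly setting up the isomorphism $\mathrm{PSL}(2,\Z)\cong C_2*C_3$ with $S\mapsto$ the order-$2$ generator and $R\mapsto$ the order-$3$ generator, and checking that the only relations among $\phi(S),\phi(R),\phi(N),\phi(X)$ are the ones already used in the rewriting system (so that no "hidden" coincidence collapses two canonical words). Since Proposition~\ref{prop:can} is explicitly flagged as a well-known fact with a list of references (\cite{LS,MKS,PS_SODA,Ran}), I expect the actual writeup to delegate this obstacle to those sources rather than reprove the free-product structure of $\mathrm{PSL}(2,\Z)$ from scratch; the existence direction via the terminating rewriting system is the part worth spelling out.
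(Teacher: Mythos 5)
The paper does not prove Proposition~\ref{prop:can} at all: it is quoted as a classical fact and delegated to the cited sources, exactly as you anticipated. Your sketch is essentially the standard argument found in those references, and it is sound in outline: uniqueness via the free-product normal form in $\mathrm{PSL}(2,\Z)\cong C_2 * C_3$ (image of $S$ of order $2$, image of $R$ of order $3$, so a word with no $SS$ or $RRR$ is a reduced alternating syllable word), with $\delta$ then pinned down by the determinant and $\gamma$ by the residual sign $\pm I$; existence via surjectivity of $\phi$ plus a $\phi$-preserving rewriting to canonical form. Two small points to tighten if you were to write it out: (i) the commutation rules for $N$ are conjugation rules ($SN=N\cdot XS$, and $RN$ rewrites to $N$ times a nontrivial word in $S,R,X$), so pushing $N$ to the front can lengthen the word and your proposed secondary measure ``word length'' does not obviously decrease --- it is cleaner to conjugate the whole suffix by $N$ in one step (or to read the normal form directly off the group decomposition $\GL\cong \SL\rtimes\langle N\rangle$ and $\mathrm{PSL}(2,\Z)\cong C_2*C_3$, which gives existence and uniqueness simultaneously); (ii) the degenerate case $n=0$ (words $N^\delta X^\gamma S^\beta$) should be checked separately, which is routine. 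Neither point is a genuine gap in the approach.
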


\begin{definition}
A subset $\s\sub \GL$ is called \emph{regular} if there is a regular language $L\sub \Sigma^*$ such that $\s=\phi(L)$.
\end{definition}

\begin{definition}
Two words $w_1$ and $w_2$ from $\Sigma^*$ are \emph{equivalent}, denoted $w_1\sim w_2$, if $\phi(w_1)=\phi(w_2)$.
Two languages $L_1$ and $L_2$ in the alphabet $\Sigma$ are \emph{equivalent}, denoted $L_1\sim L_2$, if
\begin{enumerate}[(i)]
  \item for each $w_1\in L_1$, there exists $w_2\in L_2$ such that $w_1\sim w_2$, and
  \item for each $w_2\in L_2$, there exists $w_1\in L_1$ such that $w_2\sim w_1$.
\end{enumerate}
In other words, $L_1\sim L_2$ if and only if $\phi(L_1)=\phi(L_2)$.
Two finite automata $\A_1$ and $\A_2$ with alphabet $\Sigma$ are \emph{equivalent}, denoted $\A_1\sim \A_2$, if $L(\A_1)\sim L(\A_2)$.
\end{definition}

The following theorem is a crucial ingredient of our decidability results.

\begin{theorem}[\cite{PS_SODA}] \label{thm:canon}
  For any automaton $\A$ over the alphabet $\Sigma=\{X,N,S,R\}$, there exists an automaton $\mathrm{Can}(\A)$ such that $\mathrm{Can}(\A)$ is equivalent to~$\A$ and $\mathrm{Can}(\A)$ accepts only canonical words. Furthermore, $\mathrm{Can}(\A)$ can be constructed from $\A$ in polynomial time.
\end{theorem}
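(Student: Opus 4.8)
The plan is to realise the canonicalisation map --- the function sending a word over $\Sigma$ to the unique canonical word representing the same matrix of Proposition~\ref{prop:can} --- as a composition of polynomially many automaton transformations, and to take $\mathrm{Can}(\A)$ to be the image of $\A$ under this composition. Each step will be realised either by composing $\A$ with a constant-size finite transducer or by a fixed effective construction, so that regularity and the polynomial size bound are preserved and equivalence with $\A$ holds throughout (since every rewriting used is an identity of $\GL$). I would organise the work according to the three layers $N^\delta$, $X^\gamma$, body of the canonical form $N^\delta X^\gamma S^\beta R^{\alpha_1}S\cdots SR^{\alpha_n}S^\epsilon$, using these easily-checked facts about $\phi$: $N^2\sim X^2\sim\varepsilon$; $\phi(X)$ is central; $SS\sim RRR\sim X$; and the straightening identities $NSN\sim XS$, $NRN\sim SR^2S$, $NXN\sim X$, so that conjugation by $N$ sends $\{S,R,X\}$-words to $\{S,R,X\}$-words and is trivial at the matrix level when applied twice.

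\emph{Stage A (pull the $N$'s to the front).} This stage implements the identity $w_0Nw_1N\cdots Nw_m\sim N^{m\bmod 2}\,\tilde w_0\tilde w_1\cdots\tilde w_m$, where each $w_i\in\{S,R,X\}^*$ and $\tilde w_i$ is $w_i$ with every letter replaced by its $N$-conjugate when $m-i$ is odd and $w_i$ itself otherwise. Validity follows from $aN\sim N\overline a$ (with $\overline a\in\{XS,SR^2S,X\}$), from $N^2\sim\varepsilon$, and from the triviality of double conjugation, which is exactly what lets us conjugate each block at most once rather than once per $N$ lying to its right --- so the output is only a constant factor longer than the input. A finite transducer carrying the parity of the $N$'s read so far together with a guessed value of $m\bmod 2$ performs this; composing it with $\A$ yields an equivalent automaton whose accepted words have the form $N^\delta u$ with $\delta\in\{0,1\}$ and $u\in\{S,R,X\}^*$.

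\emph{Stage B (normalise the suffix).} It remains, given the regular set $U$ of suffixes $u\in\{S,R,X\}^*$ produced above, to build an automaton accepting exactly the canonical words of the form $X^\gamma v$ (with $\gamma\in\{0,1\}$ and $v\in\{S,R\}^*$ free of the factors $SS$ and $RRR$) whose matrix lies in $\phi(U)$; prepending the $N^\delta$ then gives $\mathrm{Can}(\A)$. Since $\langle\phi(S),\phi(R)\rangle=\SL$ and the words $X^\gamma v$ above form a cross-section of $\SL$ (writing $\phi(R)^{-1}$ as $R^2$), this is the statement that one can effectively, and in polynomial time, compute an automaton for the normal forms of the elements of a rational subset of $\SL$. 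The point is that $\SL$ is virtually free: it admits a finite confluent terminating string-rewriting system over $\{S,R,X\}$, e.g.\ $\{SS\to X,\ RRR\to X,\ XX\to\varepsilon,\ SX\to XS,\ RX\to XR\}$ whose irreducible words are precisely the $X^\gamma v$, and --- what I would actually use --- it has decidable, indeed polynomial-time, rational-subset membership. An automaton recognising $\phi(U)$ can therefore be turned into one that, scanning a candidate $X^\gamma v$, guesses a corresponding path in it letter by letter, the legality of each letter-step being a fixed membership query in $\SL$ settled once and for all during the construction; intersecting with the small regular language of the admissible shapes $X^\gamma v$ finishes Stage B.

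The main obstacle is Stage B, and specifically keeping it polynomial. The normal-form function on words over $\{S,R\}$ is \emph{not} itself a rational transduction --- cancellations in $\SL$ can propagate arbitrarily far back, as in $(SR)^k(R^2S)^k\sim\varepsilon$ --- so it cannot simply be composed as a transducer, and a naive determinisation of the construction above would be exponential; it is the virtual-freeness of $\SL$ (equivalently, the good behaviour of its complete rewriting system and the context-freeness of its word problem) that makes a polynomial nondeterministic construction possible. Everything else is routine: Stages~A and~B visibly output only canonical words, each rewriting step preserves $\phi$, and Proposition~\ref{prop:can} guarantees that the words produced are the genuine, unique canonical representatives of the matrices in $\phi(L(\A))$.
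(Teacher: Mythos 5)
The theorem you are proving is not proved in this paper at all: it is imported verbatim from \cite{PS_SODA}, where the substance of the argument is an explicit saturation construction on the automaton. Your Stage~A is fine --- the conjugation identities $NSN\sim XS$, $NRN\sim SR^2S$ do check out, and the nondeterministic transducer that guesses the total parity of the $N$'s and verifies it at the end is a legitimate way to pull the $N$'s to the front with only constant blow-up. The problem is that Stage~B, which you yourself identify as ``the main obstacle,'' is exactly where the whole content of the theorem lies, and there you replace a proof by an appeal to facts that are either unavailable or are restatements of what is to be proved.

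Concretely: given a regular $U\subseteq\{S,R,X\}^*$, you must build, in polynomial time, an automaton accepting precisely the canonical words representing elements of $\phi(U)$, despite cancellation of unbounded depth (your own example $(SR)^k(R^2S)^k\sim\varepsilon$). Your sketch --- scan a candidate $X^\gamma v$ and ``guess a corresponding path in it letter by letter, the legality of each letter-step being a fixed membership query in $\SL$ settled once and for all'' --- is not a construction: because of cancellation, one letter of the canonical word corresponds to an arbitrarily long segment of the guessed path, and identity-valued segments may occur anywhere, so the ``letter-step'' queries you need are really: for every pair of states $p,q$ of the automaton for $U$ and every target $g$ in a fixed finite list (the identity, $\phi(X)$, $\phi(S)$, \dots), decide whether some word labelling a path from $p$ to $q$ maps under $\phi$ to $g$. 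Showing that these polynomially many queries can be answered in polynomial time, and that adding the corresponding shortcut transitions and intersecting with the regular language of canonical shapes is \emph{correct}, is precisely the saturation-plus-confluence argument of \cite{PS_SODA}; it cannot be outsourced to ``polynomial-time rational-subset membership for $\SL$,'' which is not provided by anything in this paper and, as a general statement about matrices encoded in binary, is in tension with the NP-completeness of the identity problem for $\SL$ \cite{BHP17} (in the generator-alphabet encoding it is true, but proving it is essentially this theorem). What is missing, then, is the core of the proof: a termination and size bound for the saturation (only $O(|Q|^2)$ transitions can ever be added, each addition decided in polynomial time) and a correctness argument, e.g.\ via confluence of your rewriting system, that the saturated automaton intersected with the canonical-form language yields exactly the canonical representatives of $\phi(L(\A))$.
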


From this theorem we obtain the following corollary.

\begin{corollary} \label{thm:bool}
  Regular subsets of $\GL$ are effectively closed under Boolean operations.
  Namely, given two regular languages $L,L'\sub \Sigma^*$, we can algorithmically construct in polynomial time regular languages $L^\cup$, $L^\cap$ and $L^c$ such that
  \[
    \phi(L^\cup)=\phi(L)\cup \phi(L'),\quad \phi(L^\cap)=\phi(L)\cap \phi(L'),\quad\text{and}\quad \phi(L^c)=\GL\setminus \phi(L).
  \]
\end{corollary}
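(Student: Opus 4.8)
The plan is to reduce the Boolean closure of regular subsets of $\GL$ to closure of ordinary regular languages under the corresponding operations, using the canonicalisation result of Theorem~\ref{thm:canon} to deal with the fact that $\phi$ is not injective. The key observation is that if a language $L$ consists \emph{only} of canonical words, then since the map from canonical words to $\GL$ is a bijection (Proposition~\ref{prop:can}), Boolean operations on the images $\phi(L)$ correspond exactly to the same Boolean operations on $L$ itself. So the strategy is: first canonicalise, then do language-theoretic Boolean algebra, then (for complement) intersect with the canonical words again.

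In detail, first I would fix a finite automaton $\A_{\mathrm{can}}$ recognising the set $C\sub\Sigma^*$ of all canonical words; this set is regular (it is defined by the forbidden-factor conditions $SS$, $RRR$, plus the positional constraints on $N$ and $X$ in the definition of canonical), and an automaton for it can be built in polynomial time. Given $L$ and $L'$ presented by automata $\A$ and $\A'$, apply Theorem~\ref{thm:canon} to obtain $\mathrm{Can}(\A)$ and $\mathrm{Can}(\A')$, which accept only canonical words and satisfy $\phi(L(\mathrm{Can}(\A)))=\phi(L)$ and likewise for $L'$; call these canonical languages $K$ and $K'$. For union, set $L^\cup := K\cup K'$ (standard product/union construction on automata): then $\phi(L^\cup)=\phi(K)\cup\phi(K')=\phi(L)\cup\phi(L')$, and this step does not even need $K,K'$ to be canonical. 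For intersection, set $L^\cap := K\cap K'$ via the standard product automaton; here canonicity is essential, because $\phi(K\cap K')=\phi(K)\cap\phi(K')$ uses injectivity of $\phi$ on canonical words — two words in $K$ and $K'$ that map to the same matrix must, being canonical, be literally equal. For complement, set $L^c := C\setminus K = C\cap(\Sigma^*\setminus K)$, again by the standard complement-and-intersect construction; then every word in $L^c$ is canonical, $\phi(L^c)\sub\GL$, and a canonical word $w$ lies in $L^c$ iff $w\notin K$ iff $\phi(w)\notin\phi(K)=\phi(L)$ (using surjectivity of $\phi$ onto $\GL$ together with uniqueness of the canonical representative), so $\phi(L^c)=\GL\setminus\phi(L)$. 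All constructions used — canonicalisation (polynomial by Theorem~\ref{thm:canon}), and the standard product/union/complement automata constructions — run in polynomial time, giving the stated complexity bound.

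The only subtle point, and the one I would be most careful about, is the verification that $\phi(K\cap K')=\phi(K)\cap\phi(K')$ and that $\phi(C\setminus K)=\GL\setminus\phi(K)$. The inclusion $\phi(K\cap K')\sub\phi(K)\cap\phi(K')$ is trivial; for the reverse, take $M\in\phi(K)\cap\phi(K')$, so $M=\phi(w_1)=\phi(w_2)$ with $w_1\in K$, $w_2\in K'$, and since both $w_1$ and $w_2$ are canonical, Proposition~\ref{prop:can} forces $w_1=w_2\in K\cap K'$, hence $M\in\phi(K\cap K')$. For complement, if $M\in\GL\setminus\phi(K)$ then by surjectivity of $\phi$ there is a canonical $w$ with $\phi(w)=M$, and $w\notin K$ (else $M\in\phi(K)$), so $w\in C\setminus K$ and $M\in\phi(C\setminus K)$; conversely if $w\in C\setminus K$ then $\phi(w)\notin\phi(K)$, for otherwise $\phi(w)=\phi(w')$ for some canonical $w'\in K$, whence $w=w'\in K$, a contradiction. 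I do not anticipate any real obstacle here — the whole corollary is essentially immediate once Theorem~\ref{thm:canon} is in hand — so the main ``work'' is simply to record these three constructions and their correctness proofs cleanly.
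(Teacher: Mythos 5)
Your proposal is correct and follows essentially the same route as the paper: canonicalise both automata via Theorem~\ref{thm:canon}, take the ordinary union/intersection of the canonical languages, and form the complement relative to the regular set of all canonical words, with Proposition~\ref{prop:can} guaranteeing that $\phi$ restricted to canonical words is a bijection onto $\GL$. Your explicit verification of $\phi(K\cap K')=\phi(K)\cap\phi(K')$ and $\phi(C\setminus K)=\GL\setminus\phi(K)$ merely spells out what the paper leaves implicit.
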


\begin{proof}
  Let $\A$ and $\A'$ be finite automata that recognise the languages $L$ and $L'$, respectively. Using Theorem \ref{thm:canon}, we can algorithmically construct the automata $\mathrm{Can}(\A)\sim \A$ and $\mathrm{Can}(\A')\sim \A'$ that accept only canonical words. Recall that, by Proposition \ref{prop:can}, the matrices from $\GL$ are in one-to-one correspondence with the regular set $L_\mathrm{Can}$ of all canonical words. Therefore, we can define $L^\cup$, $L^\cap$ and $L^c$ to be the following regular sets:
  \begin{align*}
    L^\cup &= L(\mathrm{Can}(\A)) \cup L(\mathrm{Can}(\A')),\quad L^\cap = L(\mathrm{Can}(\A)) \cap L(\mathrm{Can}(\A'))\quad \text{and}\\
    L^c &=L_\mathrm{Can} \setminus L(\mathrm{Can}(\A)).
  \end{align*}
\end{proof}

\paragraph*{Decision problems for matrix semigroups.}

If $\G$ is a finite collection of matrices, then $\la \G\ra$ denotes the semigroup generated by
$\G$, that is, $A\in \la \G\ra$ if and only if there are matrices $A_1,\dots,A_t\in \G$ such that
$A=A_1\cdots A_t$.

In this paper we will consider the following decision problems for matrix semigroups:

\begin{itemize}
  \item\textbf{The Membership Problem:} Given a finite collection of matrices $\G$ and a ``target''
    matrix $A$, decide whether $A$ belongs to $\la \G\ra$.

  \item\textbf{The Half-Space Reachability Problem:} Given a finite collection of matrices $\G$, two
    vectors $\boldsymbol{u},\boldsymbol{v}$ and a scalar $\lambda$, decide whether there exists a matrix
    $A\in \la \G\ra$ such that $\boldsymbol{u}^\top A\boldsymbol{v}\geq \lambda$.
    In other words, decide whether it is possible to reach the half-space $\h = \{\boldsymbol{x} :
    \boldsymbol{u}^\top \boldsymbol{x} \geq \lambda\}$ using matrices from $\G$ starting from an
    initial vector $\boldsymbol{v}$.
\end{itemize}

When we talk about \emph{the Membership Problem for $\GL$ or for the Heisenberg group
$\Heis(n,\Q)$}, we mean that $A$ and the matrices from $\G$ belong to $\GL$ or $\Heis(n,\Q)$,
respectively.
Similarly, in \emph{the Half-Space Reachability Problem for $\GL$ or
$\Heis(n,\Q)$} we assume that $\G$ is a finite subset of $\GL$ or $\Heis(n,\Q)$, respectively, and
furthermore we assume that the vectors $\boldsymbol{u},\boldsymbol{v}$ have rational coefficients and $\lambda$ is a
rational number.

\section{The Membership Problem for the Heisenberg Group}\label{sec:heis}

Let $\Heis(n,\mathbb{Z})$ and $\Heis(n,\mathbb{Q})$ be subgroups of $\Heis(n,\mathbb{R})$
comprising all matrices with integer and rational entries, respectively.
In this section we will prove our first main result.

\begin{theorem}\label{thm:heis}
The Membership Problem for $\Heis(n,\mathbb{Z})$ is decidable.
\end{theorem}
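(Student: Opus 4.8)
The plan is to reduce the Membership Problem for $\Heis(n,\Z)$ to a decidable problem in integer linear arithmetic, exploiting the Baker–Campbell–Hausdorff formula~\eqref{eq:baker-campbell-hausdorff}. Let $\G=\{G_1,\ldots,G_k\}$ with $G_i = (\boldsymbol a_i, \boldsymbol b_i, c_i)$ and let $A = (\boldsymbol a, \boldsymbol b, c)$ be the target. Any product $G_{i_1}\cdots G_{i_m}$ of generators is determined up to the ``commutator correction'' by how many times each generator is used: writing $x_i$ for the number of occurrences of $G_i$, the first two coordinates of the product are simply $\sum_i x_i \boldsymbol a_i$ and $\sum_i x_i \boldsymbol b_i$ (since $\psi$ is a homomorphism to an abelian group). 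So a first necessary condition is the solvability in nonnegative integers of the linear system $\sum x_i \boldsymbol a_i = \boldsymbol a$, $\sum x_i \boldsymbol b_i = \boldsymbol b$ — this is exactly where linear programming enters. The subtle part is the last coordinate: by~\eqref{eq:baker-campbell-hausdorff}, $c$ minus the ``abelian part'' equals $\tfrac12\sum_{p<q}[\log G_{i_p},\log G_{i_q}]$, and since $[\,\cdot\,,\cdot\,]$ takes values in the one-dimensional center (the crucial algebraic fact highlighted in the introduction), this is a single scalar that depends on the \emph{order} in which the generators appear, not merely on the multiset.

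The key step is therefore to characterize, for a fixed Parikh vector $(x_1,\ldots,x_k)$, the set of values the commutator sum $\sum_{p<q} \omega(i_p, i_q)$ can take, where $\omega(i,j)$ is the scalar $[\log G_i, \log G_j]$ (an antisymmetric ``form'' on indices). I would first handle one generator at a time: among all words with prescribed letter-counts, the achievable values of this sum form an arithmetic progression — more precisely an interval of integers (or rationals with bounded denominator) in a fixed congruence class — whose endpoints can be computed by a greedy/exchange argument (swapping adjacent letters changes the sum by $\pm\omega(i,j)$, and one computes the extremal orderings explicitly). This is where I expect an \emph{integer register automaton} to be useful: build an automaton that reads a word over $\{1,\ldots,k\}$, tracks the running counts of each letter in its finite control only insofar as needed, and accumulates the commutator contribution $\sum_{q} \omega(\text{current letter}, i_q)$ in an integer register; its reachable configurations, intersected with the target constraints, give a Presburger-definable set. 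One must be slightly careful because $c\in\Q$: clear denominators at the outset so that all $\boldsymbol a_i, \boldsymbol b_i, c_i$ and the target are integral (after scaling, the problem for $\Heis(n,\Q)$ is essentially the problem for $\Heis(n,\Z)$ with a fixed modulus), and note $\tfrac12$ factors are harmless since $\boldsymbol a^\top\boldsymbol b$ appearing in $\log$ has controlled parity.

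Putting it together: the target $A$ lies in $\la\G\ra$ iff there exist a nonnegative integer Parikh vector $(x_1,\ldots,x_k)$ satisfying the two linear vector equations above, and an admissible ordering realizing the required scalar $c - \sum_i x_i c_i - (\text{fixed correction from }\log/\exp)$; the set of achievable scalars for each Parikh vector is effectively a Presburger set in the $x_i$, so the whole condition is a Presburger sentence, hence decidable. \textbf{The main obstacle} is the middle step: precisely pinning down, as a function of the Parikh vector $(x_1,\ldots,x_k)$, which values of the ordering-dependent commutator sum are attainable — one needs both a clean combinatorial description (an interval in a congruence class, with endpoints given by explicit ``sorted'' words) and a verification that this description is uniformly Presburger-definable, so that the final quantifier elimination goes through. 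The fact that commutators live in $\Z$ rather than a higher-rank group is exactly what makes this single scalar tractable; without it (e.g.\ for $\Heis(3,\Z)^2$ or $\mathrm{UT}(4,\Z)$) this step collapses, as the introduction notes.
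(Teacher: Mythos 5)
Your overall reduction---the $\psi$-image of a product depends only on the Parikh vector, and the whole difficulty sits in the order-dependent central scalar given by the Baker--Campbell--Hausdorff commutator sum---is a reasonable starting point, but the step you yourself flag as the main obstacle is where the argument genuinely breaks, on two counts. First, for a fixed Parikh vector the achievable values of $\sum_{p<q}\omega(i_p,i_q)$ are in general \emph{not} an interval inside a congruence class: take $G_1=(1,0,0)$, $G_2=(0,1,0)$, $G_3=(N,0,0)$ in $\Heis(3,\Z)$, so that $\omega(1,2)=1$, $\omega(1,3)=0$, $\omega(2,3)=-N$, and the Parikh vector $(1,1,1)$; the six orderings realize exactly the four values $\pm 1\pm N$, so for large $N$ values such as $N-3$, which lie between the extremes and in the same residue class modulo $\gcd_{i<j}2\omega(i,j)=2$, are unattainable. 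Second, even where the description is clean it is not Presburger: for two non-commuting generators with $\omega(1,2)=1$ the achievable set is exactly $\{t:|t|\le x_1x_2,\ t\equiv x_1x_2 \pmod 2\}$, and the bound $|t|\le x_1x_2$ involves the product of the unknown multiplicities, so the relation ``$t$ is achievable for Parikh vector $(x_1,\ldots,x_k)$'' is not semilinear and your final step ``the whole condition is a Presburger sentence, hence decidable'' collapses. The register-automaton gadget does not rescue this: upon reading a letter $\ell$ the accumulator must be increased by $\sum_j(\text{number of $j$'s read so far})\cdot\omega(\ell,j)$, i.e.\ by a register-dependent (affine) amount, which is outside the class of automata with constant register updates for which reachability sets are known to be semilinear (the class the paper actually uses). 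Attacking the ordering-dependence uniformly over all Parikh vectors inevitably produces genuinely quadratic constraints---precisely the Knapsack-type territory the paper points out it avoids, its procedure using only linear programming and integer linear arithmetic.

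What your proposal is missing is the paper's structural case split, which is what makes everything linear. Using linear programming on the dual of $\mathrm{Cone}(\psi(A_1),\ldots,\psi(A_k))$, the generators are partitioned into $\mathcal{G}_+$, each of whose members can occur at most a computable number $\beta$ of times in any product equal to the target $A$, and $\mathcal{G}_0$, whose $\psi$-images span a subspace contained in the cone. If $\mathcal{G}_0$ is commutative, then after guessing the (at most $\beta$) occurrences and positions of $\mathcal{G}_+$-letters, the Baker--Campbell--Hausdorff expansion is \emph{linear} in the unknown multiplicities of $\mathcal{G}_0$-letters in each gap---no quadratic terms survive---so membership reduces to finitely many linear Diophantine systems over nonnegative integers. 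If $\mathcal{G}_0$ is not commutative, the paper constructs central matrices $(\boldsymbol 0,\boldsymbol 0,p)$ and $(\boldsymbol 0,\boldsymbol 0,q)$ with $p>0>q$ inside the semigroup (by taking large powers of a suitably ordered zero-sum combination), whence $\langle\mathcal{G}\rangle$ contains every central matrix whose $(1,n)$-entry is divisible by $m=\gcd(p,q)$; the order-dependent scalar constraint thereby collapses to a congruence modulo $m$, and reachability of some $(\boldsymbol a,\boldsymbol b,c')$ with $c'\equiv c\pmod m$ is decided by a register automaton with constant updates (residues in the finite control, exact values of $\boldsymbol a,\boldsymbol b$ in registers), whose reachability set is semilinear. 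Some mechanism of this kind---either bounding the troublesome generators and making BCH linear, or generating central elements of both signs to reduce the scalar to a congruence---is what your plan needs and does not have.
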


We first give an overview of our decision procedure.  Let
$\mathcal{G}=\{A_1,\ldots,A_k\}$ be a finite set of generators from
$\Heis(n,\mathbb{Z})$ and $A \in \Heis(n,\mathbb{Z})$ be a target matrix.
The idea is to partition the set of generators $\mathcal{G}$ into two
sets $\mathcal{G}_+$ and $\mathcal{G}_0$.  The definition of
$\mathcal{G}_+$ is such that there is a computable upper bound on the
number of occurrences of a matrix from $\mathcal{G}_+$ in any string
of generators whose product equals the target matrix $A$.  The
definition of $\mathcal{G}_0$ is such that the image of the semigroup
generated by $\mathcal{G}_0$ under the homomorphism $\psi$ is a
sub\emph{group} of $\mathbb{R}^{2n-4}$ (i.e., the image is closed under
inverses).  We then proceed by a case analysis according to whether or
not $\mathcal{G}_0$ is a commutative set of matrices.  If
$\mathcal{G}_0$ is commutative then the Membership Problem can be
reduced to solving a system of linear equations over non-negative
integer variables.  If $\mathcal{G}_0$ is not commutative then we
reduce the Membership Problem to a reachability query in an integer
register automaton.

\paragraph*{Partitioning the Set of Generators.}
In the rest of this section we work with an instance of the Membership
Problem in which the generators are
$A_i=( \boldsymbol a_i,\boldsymbol b_i,c_i)$, for $i=1,\ldots,k$, and the
target matrix is $A=(\boldsymbol a,\boldsymbol b,c)$. Recalling the
homomorphism $\psi:\Heis(n,\mathbb{Z})\rightarrow \mathbb{Z}^{2n-4}$,
let us define
$\boldsymbol v_i := \psi(A_i)=( \boldsymbol a_i,\boldsymbol b_i)$,
for $i=1,\ldots,k$, and
$\boldsymbol v := \psi(A) = (\boldsymbol a,\boldsymbol b)$.

A set $C\subseteq \mathbb{R}^n$ is called a \emph{cone} if
$\sum_{i=1}^k r_i \boldsymbol u_i \in C$ for all
$r_1,\ldots,r_k \in \mathbb{R}_{\geq 0}$ and
$\boldsymbol u_1,\ldots,\boldsymbol u_k \in C$.  The
\emph{dual} of a cone $C\subseteq \mathbb{R}^n$ is the cone defined as
\[ C^* := \{ \boldsymbol x \in \mathbb{R}^n : \boldsymbol x^\top \boldsymbol y \geq 0 
\text{ for all } \boldsymbol y \in C\}. \] We will use the fact that $C= C^{**}$, i.e., a cone is
equal to its double dual \cite[Chapter 2.6.1]{boyd2004}.

We write $\mathrm{Cone}(\boldsymbol v_1,\ldots,\boldsymbol v_k)$ for
the cone generated by the vectors $\boldsymbol v_1,\ldots,\boldsymbol
v_k$.  We now partition the set of generators $\mathcal{G}$ into two
disjoint sets $\mathcal{G}_0,\mathcal{G}_+$, where
\begin{eqnarray*}
\mathcal{G}_0&:=&\left\{ A_i : \forall \boldsymbol u  \in \mathrm{Cone}(\boldsymbol v_1,\ldots,\boldsymbol v_k)^* \quad
\boldsymbol{v}_i^\top \boldsymbol u =0  \right\} \\
\mathcal{G}_+&:=&\left\{ A_i : \exists \boldsymbol u \in \mathrm{Cone}(\boldsymbol v_1,\ldots,\boldsymbol v_k)^*  \quad
\boldsymbol{v}_i^\top \boldsymbol u > 0 \right\} \, .
\end{eqnarray*}
We can determine the sets $\mathcal{G}_0$ and $\mathcal{G}_+$ using linear
programming \cite{Sch86}. Without loss of generality we can assume that
$\mathcal{G}_0=\{A_1,\ldots,A_\ell\}$ for some $\ell\geq 0$.

We show how to compute a bound $\beta>0$ such that for every
sequence $\mathcal{S}=B_1,\ldots,B_m$ of elements of $\mathcal{G}$
whose product is equal to the target matrix $A$, the number of indices
$i$ such that $B_i\in \mathcal{G}_+$ is at most $\beta$. By definition of
$\mathcal{G}_+$, for each $i\in\{1,\ldots,\ell\}$, there
exists $\boldsymbol{u}_i \in
\mathrm{Cone}(\boldsymbol v_1,\ldots,\boldsymbol v_k)^*$ such that
$\boldsymbol v_i^\top \boldsymbol u_i>0$. Since $\boldsymbol v_j^\top
\boldsymbol u_i\geq 0$ for all $j\neq i$, $\mathcal{S}$ contains at
most $\frac{\boldsymbol{v}^\top \boldsymbol u_i}{\boldsymbol v_i^\top
\boldsymbol u_i}$ occurrences of matrix $A_i$ (or no occurrences if
$\boldsymbol{v}^\top \boldsymbol u_i\leq 0$). Thus we may define
$\beta:=\sum_{i} \frac{\boldsymbol{v}^\top \boldsymbol u_i}{\boldsymbol v_i^\top
\boldsymbol u_i}$ where the sum is take over the indices $i=1,\ldots,\ell$ such
that $\boldsymbol{v}^\top \boldsymbol u_i> 0$

We now consider two cases according to whether $\mathcal{G}_0$ is a commutative
set of matrices.

\paragraph*{Case I: $\mathcal{G}_0$ is commutative}

Consider a sequence
$\mathcal{S}=B_1,\ldots,B_m$ of elements of $\mathcal{G}$.  Let
$B_{i_1},\ldots,B_{i_s}$ be the subsequence of $\mathcal{S}$
containing all occurrences of elements of $\mathcal{G}_+$ in
$\mathcal{S}$, where $0=i_0<i_1<\ldots<i_s<i_{s+1}=m+1$.  For
$i\in\{1,\ldots,\ell\}$ and $j\in\{1,\ldots,s+1\}$, write $n_{i,j}$
for the number of occurrences of $A_i\in \mathcal{G}_0$ in the
subsequence of $\mathcal{S}$ lying strictly between $B_{i_{j-1}}$ and
$B_{i_j}$ (where $B_0$ is interpreted as the beginning of $\mathcal{S}$ and $B_{m+1}$ as the end of $\mathcal{S}$).
The idea is to write a formula for $\log(B_1 \cdots B_m)$
that is a linear form in the variables $n_{i,j}$.

Indeed by Equation~\eqref{eq:baker-campbell-hausdorff}, writing $C_{i_j}:=\log(B_{i_j})$
for $j=1,\ldots,s$ and $D_i:=\log(A_i)$ for $i=1,\ldots,\ell$, we have
\begin{align}
\log(B_1\cdots B_m) = \sum_{j=1}^s C_{i_j} &+ \sum_{i=1}^\ell \sum_{j=1}^{s+1} n_{i,j} D_i +
\sum_{1\leq j<j'\leq s} [C_{i_j},C_{i_{j'}}]\notag \\
 &+ \sum_{i=1}^\ell \sum_{1\leq j\leq j'\leq s} n_{i,j}[D_i,C_{i_{j'}}] +
  \sum_{i=1}^\ell \sum_{1\leq j<j'\leq s+1} n_{i,j'}[C_{i_{j}},D_i] 
\label{eq:log-product}
\end{align}

An important observation is that the above formula has no quadratic
terms due to commutativity of $\mathcal{G}_0$.
Now $B_1\cdots B_m=A$ if and only if $\log(B_1 \cdots B_m)=\log(A)$.
Setting the right-hand-side of~\eqref{eq:log-product} equal to $\log(A)$
yields a linear
Diophantine equation in variables $n_{i,j}$.  The form of this
equation is determined by the subsequence of matrices
$B_{i_1},\ldots,B_{i_s}$ lying in $\mathcal{G}_+$.  Recall that we can
without loss of generality restrict attention to the case that $s\leq
\beta$ and thus we reduce the question of whether $A$ lies in the
semigroup generated by $\mathcal{G}$ to the solubility of finitely
many linear equations in nonnegative integers.

\paragraph*{Case II: $\mathcal{G}_0$ is not commutative}
Let $\mathcal{G}_0=\{A_1,\ldots,A_\ell\}$ for some $\ell\geq 2$ such that $A_1$ and $A_2$ do not commute.
Recall that by definition of $\mathcal{G}_0$ it holds that $\boldsymbol v_i^\top \boldsymbol u = 0$ for all $\boldsymbol u  \in
\mathrm{Cone}(\boldsymbol v_1,\ldots,\boldsymbol v_k)^*$ and $i=1,\ldots,\ell$.
Therefore,
\begin{gather}
    \mathrm{Span}(\boldsymbol v_1,\ldots,\boldsymbol v_\ell) \subseteq
    \mathrm{Cone}(\boldsymbol v_1,\ldots,\boldsymbol v_k)^{**} = 
\mathrm{Cone}(\boldsymbol v_1,\ldots,\boldsymbol v_k)  \, .
\label{eq:cone-span}
\end{gather}
Following ideas from \cite{KNP18}, we will show that there exist
integers $p>0$ and $q<0$ such that 
$M_{+}=(\boldsymbol 0,\boldsymbol 0,p)$ and
$M_{-}=(\boldsymbol 0,\boldsymbol 0,q)$ and
 both lie in the semigroup
generated by $\mathcal{G}$.

Indeed, from Equation~\eqref{eq:cone-span} it follows that
$-(\boldsymbol{v}_1+\boldsymbol{v}_2)$ lies in
$\mathrm{Cone}(\boldsymbol v_1,\ldots,\boldsymbol v_k)$.  Thus there
exist $r_1,\ldots,r_k\in \mathbb{R}_{\geq 0}$ with $r_1,r_2$ strictly
  positive such that
  $\sum_{i=1}^k r_i \boldsymbol{v}_i = \boldsymbol{0}$.  But since the
  vectors $\boldsymbol{v}_i$ have integer coefficients we can solve
  the above equation in natural numbers $r_1,\ldots,r_k$ with
  $r_1,r_2>0$.  Taking a sequence of matrices $B_1,\ldots,B_m$, drawn
  from $\mathcal{G}$, such that $B_1=A_1$, $B_2=A_2$ and such that
  matrix $A_i$ appears $r_i$ times in the sequence for
  $i\in\{1,\ldots,k\}$, we obtain $\psi(B_1\cdots B_m)=\boldsymbol 0$.  Since
  $\psi$ is a homomorphism to a commutative group we have that
  $\psi(B_{\sigma(1)}^t \cdots
  B_{\sigma(m)}^t)=\boldsymbol{0}$ for all
  $t\geq 1$ and permutations $\sigma \in S_m$.

Write $C_i=\log(B_i)$ for $i=1,\ldots,m$.  Applying the
Baker-Campbell-Hausdorff Formula~\eqref{eq:baker-campbell-hausdorff},
we have that for any positive integer $t$ and permutation
$\sigma\in S_m$
\begin{eqnarray}
\log(B_{\sigma(1)}^t \cdots B_{\sigma(m)}^t) =
t \sum_{i=1}^m C_{\sigma(i)} + \frac{t^2}{2} \sum_{i<j} [C_{\sigma(i)},C_{\sigma(j)}] .
\label{eq:powers}
\end{eqnarray}

We show that we can obtain the desired matrices
$M_{+}$ and $M_{-}$ as $M_{+} := B_{\sigma(1)}^{t} \cdots
B_{\sigma(m)}^{t}$ and $M_{-} := B_{\sigma(m)}^{t} \cdots B_{\sigma(1)}^{t}$
for some permutation $\sigma \in S_m$ and large enough $t$.

Let $\sigma_0 \in S_m$ be the permutation that transposes $1$ and $2$.
Write also $\mathrm{id} \in S_m$ for the identity permutation.
Defining $\delta_\sigma:=\sum_{i<j} [C_{\sigma(i)},C_{\sigma(j)}]_{1,n}$, we
have $\delta_{\mathrm{id}} - \delta_{\sigma_0} = 2[C_1,C_2]_{1,n} \neq 0$ since
$B_1$, $B_2$ do not commute.
Hence there exists $\sigma\in \{\mathrm{id},\sigma_0\}$ with
$\delta_{\sigma}\neq 0$.  Defining the reverse permutation
$\sigma'\in S_m$ by $\sigma'(i)=\sigma(m+1-i)$ for $i=1,\ldots,m$, we
moreover have $\delta_{\sigma'} = - \delta_\sigma$, and thus we may
suppose that $\delta_\sigma>0$ and $\delta_{\sigma'}<0$.  It remains
to note, by inspection of~\eqref{eq:powers}, that for $t$ sufficiently
large, if $\delta_\sigma\neq 0$ then the sign of the $(1,n)$-entry of
$\log(B_{\sigma(1)}^{t} \cdots B_{\sigma(m)}^{t})$ is equal to the sign of
$\delta_\sigma$.  But since $\log(B_{\sigma(1)}^{t} \cdots B_{\sigma(m)}^{t})$
has zeros in all entries, except for the $(1,n)$-entry, this entry is in fact
equal to the $(1,n)$-entry of $B_{\sigma(1)}^{t} \cdots B_{\sigma(m)}^{t}$.

So, under the assumption that $\mathcal{G}_0$ is not commutative we have
shown that one can compute integers $p>0$ and $q<0$ such that 
$M_{+}=(\boldsymbol 0,\boldsymbol 0,p)$ and
$M_{-}=(\boldsymbol 0,\boldsymbol 0,q)$ are in $\mathcal{G}$. It follows that
$\langle \mathcal{G} \rangle$ contains the group
$\mathcal{N}=\{(\boldsymbol 0,\boldsymbol 0,c) \in \Heis(n,\mathbb{Z})
: c \equiv 0 \pmod m \}$, where $m=\gcd(p,q)$.  
Since
\[
  (\boldsymbol{a},\boldsymbol{b},c)\cdot (\boldsymbol{0},\boldsymbol{0},c') = 
  (\boldsymbol{0},\boldsymbol{0},c')\cdot (\boldsymbol{a},\boldsymbol{b},c) = 
  (\boldsymbol{a},\boldsymbol{b},c+c'),
\]
we have the following equivalence for the target matrix
$A = (\boldsymbol{a},\boldsymbol{b},c)$:
\begin{eqnarray*}
  A = (\boldsymbol{a},\boldsymbol{b},c) \in \langle \mathcal{G} \rangle & \text{ if{}f } &
  \exists B \in \langle \mathcal{G} \rangle \text{ such that } AB^{-1}\in \mathcal{N}\\
  & \text{ if{}f } & \exists B \in \langle \mathcal{G} \rangle \text{ such that }
 B = (\boldsymbol{a},\boldsymbol{b},c') \text{ and }  c'\equiv c \pmod m \, .
\end{eqnarray*}

To decide whether $\langle \mathcal{G} \rangle$ contains a matrix $B = (\boldsymbol{a},
\boldsymbol{b},c')$ with $c'\equiv c \pmod m$, we will use register automata. Let $d=n-2$ and
consider the following finite automaton with $2d$ registers:
\[
  \mathbf{Q} = (\{A_1,\ldots,A_k\},S,R_1,\ldots,R_d,T_1,\ldots,T_d,s_0,\delta,F),
\]
where the alphabet of $\mathbf{Q}$ is equal to the set of generator matrices $\G =
\{A_1,\ldots,A_k\}$, and the set of states $S$ is equal to
\[
  S=\{(s_1,\ldots,s_d, t_1,\ldots,t_d,u)\ :\ s_i,t_i,u\in \{0,\ldots,m-1\} \text{ for }
  i=1,\ldots,d\, \}.
\]
Intuitively, $(2d+1)$-tuples from $S$ store the values of a vector
$(\boldsymbol{a},\boldsymbol{b},c)$ modulo $m$, and the registers $R_1,\ldots,R_d$ and
$T_1,\ldots,T_d$ store the values of $\boldsymbol{a}$ and $\boldsymbol{b}$, respectively.

The initial state of $\mathbf{Q}$ is $s_0 = (0,\ldots,0)$, and the initial values of all the
registers are zeros.  The transition function $\delta$ is defined as follows.  Suppose $\mathbf{Q}$
is in a state $(s_1,\ldots,s_d, t_1,\ldots,t_d,u)$, and the current values of $R_i$ and $T_i$ are
$r_i$ and $t_i$, respectively, for $i=1,\ldots,d$. If $\mathbf{Q}$ reads a letter $A_\ell =
(a^\ell_1,\ldots,a^\ell_d, b^\ell_1,\ldots,b^\ell_d,c^\ell)$, then it moves to the state
$(s'_1,\ldots,s'_d, t'_1,\ldots,t'_d,u')$, where for each $i=1,\ldots,d$:
\begin{align*}
  &s'_i \equiv s_i + a^\ell_i \pmod m \text{ and } t'_i \equiv t_i + b^\ell_i \pmod m,\\
  &u' \equiv u + c^\ell + s_1b^\ell_1 + \cdots + s_db^\ell_d \pmod m.\\
  &\text {Also, the new value of } R_i \text{ is } r_i+a^\ell_i \text { and the new value of } T_i \text{ is } t_i+b^\ell_i.
\end{align*}
The set $F$ of final states consists of one state that corresponds to the values of the target matrix
$A = (\boldsymbol{a},\boldsymbol{b},c)=(a_1,\ldots,a_d,b_1,\ldots,b_d,c)$ modulo $m$, that is
\[
  F=\{ (s_1,\ldots,s_d, t_1,\ldots,t_d,u)\ :\ s_i\equiv a_i,\ t_i\equiv b_i,\ u\equiv c \pmod m \ \text{ for } i=1,\ldots,d\}.
\]
The automaton $\mathbf{Q}$ accepts a word $w\in {\{A_1,\cdots,A_k\}}^*$ if after reading $w$ it
reaches the final state from $F$ and the values of the registers $R_1,\ldots,R_d$ and
$T_1,\ldots,T_d$ are equal to $a_1,\ldots,a_d$ and $b_1,\ldots,b_d$, respectively.  By construction,
the language of $\mathbf{Q}$ in non-empty if and only if $\langle \mathcal{G} \rangle$ contains a
matrix $B = (\boldsymbol{a}, \boldsymbol{b},c')$ with $c'\equiv c \pmod m$.

Note that after reading any letter the registers of $\mathbf{Q}$ are changed by constant values, and
the transitions have no guards or zero checks. Let $\s$ be the set of values that the registers of
$\mathbf{Q}$ can have when it reaches the final state. It is well-known that for a register
automaton of this type the set $\s$ is effectively semilinear (see \cite{KP03,KP02} for details).  In
particular, we can decide whether $\s$ contains the vector $(\boldsymbol{a}, \boldsymbol{b})$, and
so the emptiness problem for $\mathbf{Q}$ is decidable.  Hence, in the case when $\mathcal{G}_0$ is
not commutative the Membership Problem for $\Heis(n,\mathbb{Z})$ is decidable.

\begin{corollary}
The Membership Problem for $\Heis(n,\mathbb{Q})$ is decidable.
\end{corollary}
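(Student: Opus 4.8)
The plan is to reduce the Membership Problem for $\Heis(n,\Q)$ to the Membership Problem for $\Heis(n,\Z)$ --- which is decidable by Theorem~\ref{thm:heis} --- by applying a rescaling automorphism of the Heisenberg group that clears denominators.

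First I would observe that for every nonzero rational $\lambda$ the map $\Phi_\lambda\colon \Heis(n,\mathbb{R})\to\Heis(n,\mathbb{R})$ given by $\Phi_\lambda(\boldsymbol a,\boldsymbol b,c)=(\lambda\boldsymbol a,\lambda\boldsymbol b,\lambda^2 c)$ is a group automorphism. This is immediate from the product formula: both $\Phi_\lambda\big((\boldsymbol a,\boldsymbol b,c)\cdot(\boldsymbol a',\boldsymbol b',c')\big)$ and $\Phi_\lambda(\boldsymbol a,\boldsymbol b,c)\cdot\Phi_\lambda(\boldsymbol a',\boldsymbol b',c')$ equal $\big(\lambda(\boldsymbol a+\boldsymbol a'),\,\lambda(\boldsymbol b+\boldsymbol b'),\,\lambda^2(c+c'+\boldsymbol a^\top\boldsymbol b')\big)$, and $\Phi_\lambda$ is a bijection with inverse $\Phi_{1/\lambda}$.

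Next, given an instance of the Membership Problem for $\Heis(n,\Q)$ with generators $\G=\{A_1,\ldots,A_k\}$ and target $A$, let $\lambda$ be a positive integer that is a common multiple of the denominators of all the rational entries occurring in $A_1,\ldots,A_k$ and $A$. Then for each such matrix $(\boldsymbol a,\boldsymbol b,c)$ we have $\lambda\boldsymbol a,\lambda\boldsymbol b\in\Z^{n-2}$ and $\lambda^2 c\in\Z$, so $\Phi_\lambda$ maps each of $A_1,\ldots,A_k$ and $A$ into $\Heis(n,\Z)$; moreover $\lambda$ and these images are computable from the input.

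Finally, since $\Phi_\lambda$ is a group automorphism it restricts to a semigroup isomorphism from $\langle\G\rangle$ onto $\langle\Phi_\lambda(\G)\rangle$, whence $A\in\langle\G\rangle$ if and only if $\Phi_\lambda(A)\in\langle\Phi_\lambda(A_1),\ldots,\Phi_\lambda(A_k)\rangle$. The latter is an instance of the Membership Problem for $\Heis(n,\Z)$, decidable by Theorem~\ref{thm:heis}, which completes the reduction. There is no genuine obstacle here; the only point needing (routine) care is checking multiplicativity of $\Phi_\lambda$ --- in particular that the quadratic term $\boldsymbol a^\top\boldsymbol b'$ scales by exactly $\lambda^2$, which is precisely what forces the exponent $2$ on the $c$-coordinate and hence the use of $\lambda^2$ (rather than $\lambda$) to clear the denominator there.
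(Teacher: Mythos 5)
Your proposal is correct and is essentially the same argument as the paper's: the paper writes the generators and target as $(\tfrac{1}{N}\boldsymbol a', \tfrac{1}{N}\boldsymbol b', \tfrac{1}{N^2}c')$ with integer primed data and verifies the same multiplicativity identity, which is exactly your rescaling map $\Phi_N$ reducing the problem to $\Heis(n,\mathbb{Z})$ and Theorem~\ref{thm:heis}. The only difference is presentational (you phrase the scaling as an explicit group automorphism with inverse $\Phi_{1/\lambda}$, the paper checks the product formula directly), so no further comparison is needed.
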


\begin{proof}
  Let $A_i=( \boldsymbol a_i,\boldsymbol b_i,c_i)$, for $i=1,\ldots,k$, and $A=(\boldsymbol a,
  \boldsymbol b, c)$ be the given generators and the target matrix from $\Heis(n,\mathbb{Q})$. Let
  $N$ be a natural number such that $A_i=( \frac{1}{N}\boldsymbol a'_i,\frac{1}{N}\boldsymbol
  b'_i,\frac{1}{N^2}c'_i)$, for $i=1,\ldots,k$, and $A=(\frac{1}{N}\boldsymbol a' ,\frac{1}{N}
  \boldsymbol b',\frac{1}{N^2}c')$, where $\boldsymbol a'_i, \boldsymbol b'_i, c'_i$, for
  $i=1,\ldots,k$, and $\boldsymbol a', \boldsymbol b',c'$ are integer vectors and numbers.
  It is easy to check that
  \[
    (\tfrac{1}{N}\boldsymbol x,\tfrac{1}{N}\boldsymbol y,\tfrac{1}{N^2}c)\cdot
    (\tfrac{1}{N}\boldsymbol x',\tfrac{1}{N}\boldsymbol y',\tfrac{1}{N^2}c') =
    (\tfrac{1}{N}(\boldsymbol x + \boldsymbol x'),\tfrac{1}{N}(\boldsymbol y + \boldsymbol
    y'),\tfrac{1}{N^2} (c+c' + \boldsymbol x^\top\boldsymbol y'))\,.
  \]
  Hence $A\in \la A_1,\ldots,A_k\ra$ if{}f $A'\in \la A'_1,\ldots,A'_k\ra$, where $A'=(\boldsymbol a',
  \boldsymbol b', c')$ and $A'_i=( \boldsymbol a'_i,\boldsymbol b'_i,c'_i)$, for $i=1,\ldots,k$, are
  matrices with integer entries, that is, from $\Heis(n,\mathbb{Z})$. By Theorem \ref{thm:heis} we can
  decide whether $A'\in \la A'_1,\ldots,A'_k\ra$.
\end{proof}

\section{The Half-Space Reachability Problem for $\GL$}\label{sec:gl}

In this section we will show that the Half-Space Reachability Problem for $\GL$ is decidable
(Theorem \ref{thm:halfpl}).

\begin{definition}
  For an integer $n$, the sign of $n$ as follows: $\sg(n)=1$ if $n>0$,
  $\sg(n)=-1$ if $n<0$, and $\sg(n)=*$ if $n=0$.

For a matrix $A=\begin{pmatrix}a & b\\ c & d\end{pmatrix}\in \Z^{2\times 2}$,
define $\sg(A):=\begin{pmatrix} \sg(a) & \sg(b)\\ \sg(c) & \sg(d)\end{pmatrix}$.
\end{definition}

If $A$ and $B$ are two expressions whose values are in the set $\{1,-1,*\}$,
then the notation $A\simeq B$ means that $A=B$ or $A=*$ or $B=*$.

\begin{proposition}\label{pr:1}
Suppose $w$ is a canonical word of the form $w=SR^{\alpha_1}SR^{\alpha_2}\cdots
SR^{\alpha_n}$, where $\alpha_i\in \{1,2\}$ for $i=1,\ldots,n$. Then
$\sg(\phi(w))\simeq \begin{pmatrix} {(-1)}^n & {(-1)}^n\\ {(-1)}^n & {(-1)}^n\end{pmatrix}$.
\end{proposition}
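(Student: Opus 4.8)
The plan is to prove the claim by induction on $n$, tracking the sign pattern of the partial products $\phi(SR^{\alpha_1}\cdots SR^{\alpha_k})$ as $k$ increases. First I would record the two relevant matrices explicitly: $\phi(S)=\begin{pmatrix}0&-1\\1&0\end{pmatrix}$, $\phi(SR)=\phi(S)\phi(R)=\begin{pmatrix}0&-1\\1&0\end{pmatrix}\begin{pmatrix}0&-1\\1&1\end{pmatrix}=\begin{pmatrix}-1&-1\\0&-1\end{pmatrix}$, and $\phi(SRR)=\phi(SR)\phi(R)=\begin{pmatrix}-1&-1\\0&-1\end{pmatrix}\begin{pmatrix}0&-1\\1&1\end{pmatrix}=\begin{pmatrix}-1&0\\-1&-1\end{pmatrix}$. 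So for $n=1$ the product is either $\phi(SR)$ or $\phi(SRR)$, and in both cases $\sg(\phi(w))\simeq\begin{pmatrix}-1&-1\\-1&-1\end{pmatrix}$, matching $(-1)^1$ in every entry (the $\simeq$ absorbs the two zero entries). This is the base case.

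For the inductive step, suppose the claim holds for words with $n-1$ blocks, and let $w=SR^{\alpha_1}\cdots SR^{\alpha_n}$. Write $w=w'\,SR^{\alpha_n}$ where $w'=SR^{\alpha_1}\cdots SR^{\alpha_{n-1}}$, so by the induction hypothesis $\sg(\phi(w'))\simeq\begin{pmatrix}(-1)^{n-1}&(-1)^{n-1}\\(-1)^{n-1}&(-1)^{n-1}\end{pmatrix}$; I would phrase this as: every nonzero entry of $\phi(w')$ has sign $(-1)^{n-1}$. Now multiply on the right by $\phi(SR^{\alpha_n})$, which is one of the two matrices $\begin{pmatrix}-1&-1\\0&-1\end{pmatrix}$ or $\begin{pmatrix}-1&0\\-1&-1\end{pmatrix}$. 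The key structural observation is that each of these matrices has all nonzero entries of sign $-1$, and—crucially—each row and each column contains at least one nonzero entry. Hence when we compute an entry of $\phi(w)=\phi(w')\phi(SR^{\alpha_n})$ as a dot product of a row of $\phi(w')$ with a column of $\phi(SR^{\alpha_n})$, every nonzero summand is a product of something of sign $(-1)^{n-1}$ (or zero) with something of sign $-1$ (or zero), hence of sign $(-1)^n$; and since the row of $\phi(w')$ is not identically zero ($\phi(w')$ is invertible) and the column of $\phi(SR^{\alpha_n})$ has a nonzero entry, one must check there is no cancellation—but all nonzero summands have the *same* sign $(-1)^n$, so there is none and the entry is either $0$ or of sign $(-1)^n$. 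This gives $\sg(\phi(w))\simeq\begin{pmatrix}(-1)^n&(-1)^n\\(-1)^n&(-1)^n\end{pmatrix}$, completing the induction.

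The only genuinely delicate point—and the one I would be most careful about—is the no-cancellation argument: I must make sure that in each of the four entries of the product there is at least one nonzero summand, for otherwise the entry is zero and the $\simeq$ still holds, but I want to be sure the statement is not vacuously weaker than intended. In fact it is fine: the proposition only asserts $\simeq$, so zero entries are permitted, and the sign-uniformity of the summands ($(-1)^n$ throughout) is what rules out a nonzero entry of the *wrong* sign. A secondary point is the edge case $n=0$, i.e. the empty word, where $\phi(w)=I_2$ and $\sg(I_2)=\begin{pmatrix}1&*\\ *&1\end{pmatrix}\simeq\begin{pmatrix}(-1)^0&(-1)^0\\(-1)^0&(-1)^0\end{pmatrix}$ holds trivially; I would either include this as the base case instead of $n=1$, or simply remark that the canonical form requires $n\geq 1$ in the stated hypothesis. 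I expect the whole argument to be short once the three small matrix products above are in hand; there is no real obstacle, just bookkeeping of signs.
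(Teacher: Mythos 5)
Your proof is correct and follows essentially the same route as the paper: induction on $n$ with the explicit matrices $\phi(SR)$ and $\phi(SR^2)$ as the base case, and an inductive step that right-multiplies by $\phi(SR^{\alpha_n})$ and observes that all nonzero summands in each entry carry the same sign $(-1)^n$, so no cancellation of the wrong kind can occur. The paper simply writes out the resulting entries ($-a$, $-a-b$, etc.) explicitly instead of phrasing the no-cancellation point abstractly, but the argument is the same.
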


\begin{proof}
The proof is by induction on $n$. For $n=1$, we have
\begin{align*}
\sg(\phi(SR)) &= \sg \begin{pmatrix}-1 & -1\\ 0 & -1\end{pmatrix} =
\begin{pmatrix}\sg(-1) & \sg(-1)\\ \sg(0) & \sg(-1)\end{pmatrix} \simeq \begin{pmatrix}-1 & -1\\ -1 & -1\end{pmatrix}\quad \text{and}\\
\sg(\phi(SR^2)) &= \sg \begin{pmatrix}-1 & 0\\ -1 & -1\end{pmatrix}  =
\begin{pmatrix}\sg(-1) & \sg(0)\\ \sg(-1) & \sg(-1)\end{pmatrix} \simeq \begin{pmatrix}-1 & -1\\ -1 & -1\end{pmatrix}
\end{align*}

Suppose the statement of the proposition is true for $w=SR^{\alpha_1}SR^{\alpha_2}\cdots SR^{\alpha_n}$ and consider the words $\mathit{wSR}$ and $\mathit{wSR}^2$. Assume that $\phi(w)=\begin{pmatrix}a & b\\ c & d\end{pmatrix}$ and
\[
  \sg(\phi(w))= \begin{pmatrix}\sg(a) & \sg(b)\\ \sg(c) & \sg(d)\end{pmatrix}
  \simeq \begin{pmatrix} {(-1)}^n & {(-1)}^n\\ {(-1)}^n & {(-1)}^n\end{pmatrix}\!.
\]
Then we have
\begin{align*}
\phi(\mathit{wSR}) = \phi(w)\phi(\mathit{SR}) &= \begin{pmatrix}a & b\\ c & d\end{pmatrix}\begin{pmatrix}-1 & -1\\ 0 & -1\end{pmatrix} = \begin{pmatrix}-a & -a-b\\ -c & -c-d\end{pmatrix}\quad \text{and}\\
\phi(\mathit{wSR}^2) = \phi(w)\phi(\mathit{SR}^2) &= \begin{pmatrix}a & b\\ c & d\end{pmatrix}\begin{pmatrix}-1 & 0\\ -1 & -1\end{pmatrix} = \begin{pmatrix}-a-b & -b\\ -c-d & -d\end{pmatrix}\!.
\end{align*}
From these formulas it not hard to see that
\[
  \sg(\phi(\mathit{wSR})) \simeq \begin{pmatrix} {(-1)}^{n+1} & {(-1)}^{n+1}\\
  {(-1)}^{n+1} & {(-1)}^{n+1}\end{pmatrix}\quad \text{and}\quad
  \sg(\phi(\mathit{wSR}^2)) \simeq \begin{pmatrix} {(-1)}^{n+1} & {(-1)}^{n+1}\\
  {(-1)}^{n+1} & {(-1)}^{n+1}\end{pmatrix}.
\]
\end{proof}

\begin{proposition}\label{pr:2}
Let $w$ be a canonical word of the form $w=S^\beta
R^{\alpha_1}SR^{\alpha_2}\cdots SR^{\alpha_n}S^\epsilon$, where
$\beta,\epsilon\in \{0,1\}$ and $\alpha_i\in \{1,2\}$, $i=1,\ldots,n$. Then
$
  \sg(\phi(w))\simeq\begin{pmatrix} {(-1)}^n & {(-1)}^{n+\epsilon}\\ {(-1)}^{n-1+\beta} & {(-1)}^{n-1+\beta+\epsilon}\end{pmatrix}
$.
\end{proposition}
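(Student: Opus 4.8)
The plan is to deduce the statement from Proposition~\ref{pr:1} by peeling off the boundary letters $S^\beta$ and $S^\epsilon$ and tracking how a single left- or right-multiplication by $\phi(S)$, $\phi(R)$ or $\phi(R^2)$ transforms a sign pattern. Two elementary observations about signs do all the work: (i) if $x,y\in\Z$ satisfy $\sg(x)\simeq s$ and $\sg(y)\simeq s$ for some $s\in\{1,-1\}$, then $\sg(x+y)\simeq s$, since a cancellation can only produce $\sg(x+y)=*$, which is $\simeq$ everything; and (ii) $\sg(-x)\simeq -s$ whenever $\sg(x)\simeq s$, with the convention $-*=*$.

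First I would split on $\beta$. If $\beta=1$, then $w=w_0\,S^\epsilon$, where $w_0=SR^{\alpha_1}SR^{\alpha_2}\cdots SR^{\alpha_n}$ is exactly the word of Proposition~\ref{pr:1}, so all four entries of $\sg(\phi(w_0))$ are $\simeq(-1)^n$. If $\beta=0$, then $w=R^{\alpha_1}\,w_1\,S^\epsilon$, where $w_1=SR^{\alpha_2}SR^{\alpha_3}\cdots SR^{\alpha_n}$ is the Proposition~\ref{pr:1} word with $n-1$ blocks; in the degenerate case $n=1$ this $w_1$ is the empty word and $\phi(w_1)=I_2$, whose sign matrix has diagonal entries $1$ and off-diagonal entries $*$, which is still $\simeq$ the pattern with all entries equal to $(-1)^{0}$. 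So in every case all four entries of $\sg(\phi(w_1))$ are $\simeq(-1)^{n-1}$.

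Next, for $\beta=0$ I would multiply $\phi(w_1)$ on the left by $\phi(R^{\alpha_1})$. Using the explicit matrices $\phi(R)$ and $\phi(R^2)=\begin{pmatrix}-1 & -1\\ 1 & 0\end{pmatrix}$ together with (i) and (ii), a short computation shows that for \emph{both} $\alpha_1=1$ and $\alpha_1=2$ the product $\phi(R^{\alpha_1})\phi(w_1)$ has a sign pattern whose top-row entries are $\simeq(-1)^{n}$ and whose bottom-row entries are $\simeq(-1)^{n-1}$; the fact that this is independent of $\alpha_1$ is what makes the final answer depend only on $n,\beta,\epsilon$. This is precisely the claimed pattern when $\beta=0,\epsilon=0$ (where $n-1+\beta=n-1$), while for $\beta=1,\epsilon=0$ the claimed pattern is literally $\sg(\phi(w_0))$. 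Finally I would handle $\epsilon=1$: right-multiplication by $\phi(S)$ swaps the two columns and negates the new second column, so it sends a sign pattern with columns $(p,r)^\top,(q,s)^\top$ to one with columns $(q,s)^\top,(-p,-r)^\top$. Applying this to the two patterns found above yields exactly the right-hand side of the proposition in the remaining cases $(\beta,\epsilon)=(0,1)$ and $(1,1)$, completing the case analysis.

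Every step reduces to a $2\times 2$ matrix multiplication followed by an application of (i)--(ii), so there is no genuine difficulty. The two places that call for a little care, and which I would single out as the main (mild) obstacles, are the degenerate case $n=1$ (where $w_1$ is empty), and the verification that left-multiplication by $\phi(R^{\alpha_1})$ produces the \emph{same} sign pattern for $\alpha_1=1$ and $\alpha_1=2$ --- this independence is exactly the phenomenon already present in Proposition~\ref{pr:1} and is what is being propagated here.
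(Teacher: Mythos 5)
Your proof is correct, and its overall shape matches the paper's: reduce to Proposition~\ref{pr:1}, then account for the boundary letters by one extra matrix multiplication plus sign bookkeeping (your observations (i)--(ii) are exactly what the paper uses implicitly, and your treatment of $\epsilon=1$ by right multiplication with $\phi(S)$, i.e.\ the column swap-and-negate $\begin{pmatrix}a&b\\c&d\end{pmatrix}\mapsto\begin{pmatrix}b&-a\\d&-c\end{pmatrix}$, is literally the paper's computation for that case). Where you diverge is the case $\beta=0$: the paper keeps the full $n$-block word and strips the leading $S$ by left multiplication with $\phi(S)^{-1}=-\phi(S)$, obtaining $\begin{pmatrix}c&d\\-a&-b\end{pmatrix}$ from the Proposition~\ref{pr:1} matrix in one uniform step, whereas you peel off $R^{\alpha_1}$ and apply Proposition~\ref{pr:1} to the $(n-1)$-block suffix, which forces the case split $\alpha_1\in\{1,2\}$ (in effect redoing the inductive step of Proposition~\ref{pr:1}, now on the left) and the degenerate check $n=1$ with $\phi(w_1)=I_2$. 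You handle both correctly: $\phi(R)\phi(w_1)=\begin{pmatrix}-c&-d\\a+c&b+d\end{pmatrix}$ and $\phi(R^2)\phi(w_1)=\begin{pmatrix}-a-c&-b-d\\a&b\end{pmatrix}$ do give top row $\simeq(-1)^n$ and bottom row $\simeq(-1)^{n-1}$, and the identity's sign pattern is $\simeq$ the all-ones pattern, so the conclusions agree. The paper's variant buys uniformity (no $\alpha_1$ split, no empty-word case); yours buys an explicit statement of the sign-addition lemma that the paper leaves inside ``it is not hard to see''. One nitpick: in (i) the phrase ``a cancellation can only produce $*$'' is not quite the right justification---two integers whose signs are each $\simeq s$ with $s\in\{1,-1\}$ cannot cancel unless one of them is already zero---but the claim itself, and every use you make of it, is sound.
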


\begin{proof}
  First, consider the case when $\epsilon=0$. Suppose $\phi(SR^{\alpha_1}SR^{\alpha_2}\cdots SR^{\alpha_n}) = \begin{pmatrix}a & b\\ c & d\end{pmatrix}$. Then by Proposition \ref{pr:1} we have
  \[
    \sg(\phi(SR^{\alpha_1}SR^{\alpha_2}\cdots SR^{\alpha_n})) =
    \begin{pmatrix}\sg(a) & \sg(b)\\ \sg(c) & \sg(d)\end{pmatrix} \simeq \begin{pmatrix} {(-1)}^n & {(-1)}^n\\ {(-1)}^n & {(-1)}^n\end{pmatrix}.
  \]
  On the other hand,
  \begin{align*}
    \phi(R^{\alpha_1}SR^{\alpha_2}\cdots SR^{\alpha_n}) &= -\phi(S)\phi(SR^{\alpha_1}SR^{\alpha_2}\cdots SR^{\alpha_n})\\
    &= \begin{pmatrix}0 & 1\\ -1 & 0\end{pmatrix} \begin{pmatrix}a & b\\ c & d\end{pmatrix} = \begin{pmatrix}c & d\\ -a & -b\end{pmatrix}\!.
  \end{align*}
  Hence $\sg(\phi(R^{\alpha_1}SR^{\alpha_2}\cdots SR^{\alpha_n})) =
  \begin{pmatrix}\sg(c) & \sg(d)\\ \sg(-a) & \sg(-b)\end{pmatrix} \simeq \begin{pmatrix} {(-1)}^n & {(-1)}^n\\ {(-1)}^{n-1} & {(-1)}^{n-1}\end{pmatrix}$.
  Thus, for $\beta\in \{0,1\}$, we showed that 
  \begin{equation}
    \label{eq:1}
    \sg(\phi(S^\beta R^{\alpha_1}SR^{\alpha_2}\cdots SR^{\alpha_n}))\simeq\begin{pmatrix} {(-1)}^n & {(-1)}^{n}\\ {(-1)}^{n-1+\beta} & {(-1)}^{n-1+\beta}\end{pmatrix}\!.
  \end{equation}

  Now we consider the case when $\epsilon=1$. Suppose $\phi(S^\beta R^{\alpha_1}SR^{\alpha_2}\cdots SR^{\alpha_n}) = \begin{pmatrix}a & b\\ c & d\end{pmatrix}$. Then
  \begin{equation}\label{eq:2}
    \begin{aligned}
      \phi(S^\beta R^{\alpha_1}SR^{\alpha_2}\cdots SR^{\alpha_n}S) &= \phi(S^\beta R^{\alpha_1}SR^{\alpha_2}\cdots SR^{\alpha_n})\phi(S)\\
      &= \begin{pmatrix}a & b\\ c & d\end{pmatrix}\begin{pmatrix}0 & -1\\ 1 & 0\end{pmatrix} = \begin{pmatrix}b & -a\\ d & -c\end{pmatrix}\!.
    \end{aligned}
  \end{equation}
  From equations (\ref{eq:1}) and (\ref{eq:2}) we obtain
  \begin{equation}\label{eq:3}
    \sg(\phi(S^\beta R^{\alpha_1}SR^{\alpha_2}\cdots SR^{\alpha_n}S)) =
    \begin{pmatrix}\sg(b) & \sg(-a)\\ \sg(d) & \sg(-c)\end{pmatrix} \simeq \begin{pmatrix} {(-1)}^n & {(-1)}^{n+1}\\ {(-1)}^{n-1+\beta} & {(-1)}^{n-1+\beta+1}\end{pmatrix}\!.
  \end{equation}
  Equations (\ref{eq:1}) and (\ref{eq:3}) imply that for $\beta,\epsilon\in \{0,1\}$
  \[
    \sg(\phi(S^\beta R^{\alpha_1}SR^{\alpha_2}\cdots SR^{\alpha_n}S^\epsilon))
    \simeq \begin{pmatrix} {(-1)}^n & {(-1)}^{n+\epsilon}\\ {(-1)}^{n-1+\beta} & {(-1)}^{n-1+\beta+\epsilon}\end{pmatrix}\!.
  \]
\end{proof}

From Proposition \ref{pr:2} and the equalities
\[
  \phi(X)\begin{pmatrix}a & b\\ c & d\end{pmatrix} = \begin{pmatrix}-a & -b\\ -c & -d\end{pmatrix}\quad \text{and}\quad
  \phi(N)\begin{pmatrix}a & b\\ c & d\end{pmatrix} = \begin{pmatrix}1 & 0\\ 0 & -1\end{pmatrix} \begin{pmatrix}a & b\\ c & d\end{pmatrix} = \begin{pmatrix}a & b\\ -c & -d\end{pmatrix}
\]
we obtain the following proposition.

\begin{proposition}\label{pr:3}
  Let $w$ be a canonical word of the form $w=N^\delta X^\gamma S^\beta R^{\alpha_1}SR^{\alpha_2}\cdots SR^{\alpha_n}S^\epsilon$, where $\beta,\gamma, \delta, \epsilon\in \{0,1\}$ and $\alpha_i\in \{1,2\}$ for $i=1,\ldots,n$. Then
  \[
    \sg(\phi(w))\simeq\begin{pmatrix} {(-1)}^{n+\gamma} & {(-1)}^{n+\gamma+\epsilon}\\ {(-1)}^{n-1+\beta+\gamma+\delta} & {(-1)}^{n-1+\beta+\gamma+\delta+\epsilon}\end{pmatrix}\!.
  \]
\end{proposition}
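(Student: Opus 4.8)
The plan is to reduce Proposition~\ref{pr:3} directly to Proposition~\ref{pr:2} by tracking how left-multiplication by $\phi(N)$ and $\phi(X)$ acts on sign matrices. The point is that these two generators are (signed) permutation/diagonal matrices, so multiplying by them never adds two entries together and hence never manufactures a new zero.

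First I would record the elementary fact, immediate from the two displayed identities preceding the proposition, that for any $\begin{pmatrix}a&b\\c&d\end{pmatrix}$ the matrix $\phi(X)\begin{pmatrix}a&b\\c&d\end{pmatrix}$ has all four entries negated, while $\phi(N)\begin{pmatrix}a&b\\c&d\end{pmatrix}$ has exactly the bottom row negated, and in both cases each entry of the product is $\pm$ a single entry of the original. Consequently, at the level of signs, left-multiplication by $\phi(X)$ multiplies every entry of $\sg(\cdot)$ by $-1$ and left-multiplication by $\phi(N)$ multiplies the two bottom entries by $-1$, under the convention $-{*}={*}$. Since no cancellation can occur, these operations preserve the relation $\simeq$ \emph{exactly}: if $\sg(M)\simeq M'$ then $\sg(\phi(X)M)\simeq -M'$, and $\sg(\phi(N)M)\simeq$ the matrix $M'$ with its bottom row negated.

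Next, writing $w_0:=S^\beta R^{\alpha_1}SR^{\alpha_2}\cdots SR^{\alpha_n}S^\epsilon$, the fact that $\phi$ is a morphism gives $\phi(w)=\phi(N)^\delta\phi(X)^\gamma\phi(w_0)$. Proposition~\ref{pr:2} provides
\[
\sg(\phi(w_0))\simeq\begin{pmatrix} {(-1)}^n & {(-1)}^{n+\epsilon}\\ {(-1)}^{n-1+\beta} & {(-1)}^{n-1+\beta+\epsilon}\end{pmatrix}.
\]
Applying the $\phi(X)^\gamma$ step adds $\gamma$ to every exponent of $-1$ (i.e. multiplies each sign by $(-1)^\gamma$), and then the $\phi(N)^\delta$ step adds $\delta$ to the two bottom exponents only. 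Carrying this out produces exactly the matrix
$\begin{pmatrix} {(-1)}^{n+\gamma} & {(-1)}^{n+\gamma+\epsilon}\\ {(-1)}^{n-1+\beta+\gamma+\delta} & {(-1)}^{n-1+\beta+\gamma+\delta+\epsilon}\end{pmatrix}$
claimed in the statement, completing the argument.

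I do not expect any genuine obstacle here: all of the quantitative content was already established in Propositions~\ref{pr:1} and~\ref{pr:2}, and what remains is the slightly fiddly bookkeeping of the $\simeq$ relation under sign-permutation operations — specifically, justifying that left-multiplication by the monomial matrices $\phi(X)$ and $\phi(N)$ merges no two entries, so the sign pattern transforms exactly rather than merely up to $\simeq$, and then getting the four parities $n+\gamma$, $n+\gamma+\epsilon$, $n-1+\beta+\gamma+\delta$, $n-1+\beta+\gamma+\delta+\epsilon$ right across the two successive steps.
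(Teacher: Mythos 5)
Your proposal is correct and follows essentially the same route as the paper, which derives Proposition~\ref{pr:3} directly from Proposition~\ref{pr:2} together with the two displayed identities for left-multiplication by $\phi(X)$ and $\phi(N)$ (negating all entries, respectively the bottom row). Your additional remark that these monomial matrices cause no cancellation, so the sign pattern and the relation $\simeq$ transform exactly, is a sound elaboration of the bookkeeping the paper leaves implicit.
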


\begin{theorem} \label{thm:pos}
  The set of matrices in $\GL$ whose particular entry is nonnegative forms a regular subset. In other words, for all $i,j\in \{1,2\}$, the following subset of $\GL$ is regular:
  \[
    \mathrm{Pos}_{ij} = \left\{\begin{pmatrix}a_{11} & a_{12}\\a_{21}  & a_{22}\end{pmatrix}\in \GL\ :\ a_{ij}\geq 0 \right\}.
  \]
\end{theorem}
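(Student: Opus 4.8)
The plan is to exhibit $\mathrm{Pos}_{ij}$ as a union of two regular subsets of $\GL$: one accounting for the matrices whose $(i,j)$-entry is \emph{strictly} positive, which I will obtain from a parity condition on canonical words via Proposition~\ref{pr:3}; and $\mathrm{Zero}_{ij} := \{M \in \GL : M_{ij} = 0\}$, which absorbs every matrix with a zero $(i,j)$-entry. Since $\phi(L \cup L') = \phi(L) \cup \phi(L')$, regular subsets of $\GL$ are closed under finite unions, so it suffices to prove that each of these two sets is regular.

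For the first part, Proposition~\ref{pr:3} says that for a canonical word $w = N^\delta X^\gamma S^\beta R^{\alpha_1} S \cdots S R^{\alpha_n} S^\epsilon$ with $n \geq 1$, the $(i,j)$-entry of $\phi(w)$ is either $0$ or has sign $(-1)^{e_{ij}(w)}$, where $e_{ij}(w)$ is one of the four exponents in that proposition (e.g.\ $e_{11} = n + \gamma$) and hence depends only on the parity of $n$ and on the bits $\beta,\gamma,\delta,\epsilon$. I would let $L^{ij} \subseteq \Sigma^*$ be the set of canonical words with $n \geq 1$ and $e_{ij}(w)$ even, together with the finitely many canonical words containing no letter $R$ whose image under $\phi$ has nonnegative $(i,j)$-entry. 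The language of canonical words is regular, and a finite automaton scanning such a word can track the parity of the number of maximal $R$-blocks together with $\beta,\gamma,\delta,\epsilon$; intersecting with ``$e_{ij}$ even'' and adjoining a fixed finite language shows $L^{ij}$ is regular. Using Proposition~\ref{prop:can} one checks that $M_{ij} > 0$ forces the canonical representative of $M$ into $L^{ij}$ (since then $(-1)^{e_{ij}} = \sg(M)_{ij} = 1$), while every $w \in L^{ij}$ has $\phi(w)_{ij} \geq 0$. Hence $\mathrm{Pos}_{ij} = \phi(L^{ij}) \cup \mathrm{Zero}_{ij}$.

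It remains to prove that $\mathrm{Zero}_{ij}$ is regular --- and here Proposition~\ref{pr:3} is silent, since all entries are allowed to vanish. The key observation is that $M_{ij} = 0$ together with $\det M = \pm 1$ forces the two ``cross'' entries of $M$ to lie in $\{\pm 1\}$, leaving one free integer entry. So $\mathrm{Zero}_{ij}$ is a union, over the at most four admissible sign patterns, of sets of the form $P \cdot \mathcal U$ or $\mathcal U \cdot P$, where $P \in \GL$ is a fixed matrix and $\mathcal U$ is one of the two ``unipotent lines'' $\{\bigl(\begin{smallmatrix}1 & t\\ 0 & 1\end{smallmatrix}\bigr) : t \in \Z\}$ or $\{\bigl(\begin{smallmatrix}1 & 0\\ t & 1\end{smallmatrix}\bigr) : t \in \Z\}$; for instance the matrices of $\GL$ with zero $(1,1)$-entry are exactly the products $\bigl(\begin{smallmatrix}0 & b\\ c & 0\end{smallmatrix}\bigr) \bigl(\begin{smallmatrix}1 & cd\\ 0 & 1\end{smallmatrix}\bigr)$ with $b,c \in \{\pm 1\}$ and $d \in \Z$. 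Writing $P = \phi(p)$ for a word $p$ gives $P \cdot \phi(L) = \phi(pL)$, so it is enough to show that the two unipotent lines are regular subsets; this I would deduce from $\phi((SR)^t) = (-1)^t \bigl(\begin{smallmatrix}1 & t\\ 0 & 1\end{smallmatrix}\bigr)$ and $\phi((SR^2)^t) = (-1)^t \bigl(\begin{smallmatrix}1 & 0\\ t & 1\end{smallmatrix}\bigr)$ for $t \geq 0$ (correcting the sign by an optional leading $X$), together with the conjugations $\phi(NX) \bigl(\begin{smallmatrix}1 & t\\ 0 & 1\end{smallmatrix}\bigr) \phi(NX) = \bigl(\begin{smallmatrix}1 & -t\\ 0 & 1\end{smallmatrix}\bigr)$ and $\phi(N) \bigl(\begin{smallmatrix}1 & 0\\ t & 1\end{smallmatrix}\bigr) \phi(N) = \bigl(\begin{smallmatrix}1 & 0\\ -t & 1\end{smallmatrix}\bigr)$ to reach the negative values of $t$.

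I expect the treatment of $\mathrm{Zero}_{ij}$ to be the main obstacle: the strictly-positive part reduces almost immediately to a regular parity condition once the bookkeeping automaton is described, whereas for the zero entries one gets no help from Proposition~\ref{pr:3} and must separately identify the coset structure and produce explicit regular encodings of the unipotent lines. The degenerate case $n = 0$ (words with no letter $R$) is harmless, since it contributes only a finite set of matrices.
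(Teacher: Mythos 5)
Your proposal is correct, and its core is the same as the paper's: read off the sign of the $(i,j)$-entry from the parity data $(n \bmod 2,\beta,\gamma,\delta,\epsilon)$ of the canonical word via Proposition~\ref{pr:3}, and note that a finite automaton over canonical words can track this data. Where you genuinely diverge is the separate treatment of $\mathrm{Zero}_{ij}$, and this is not wasted effort: Proposition~\ref{pr:3} constrains only nonzero entries, so the paper's equivalence~\eqref{eq:eqv} is really only an implication when the entry vanishes. For example $\phi(XR^2)=\left(\begin{smallmatrix}1 & 1\\ -1 & 0\end{smallmatrix}\right)$ is canonical with $n=1$, $\gamma=1$, $\beta=\delta=\epsilon=0$, hence odd parity, yet it lies in $\mathrm{Pos}_{22}$; the even-parity language of the paper's proof therefore captures $\{a_{ij}>0\}$ but only some zero-entry matrices, and your union with $\mathrm{Zero}_{ij}$ is exactly the patch needed. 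Your hands-on regularity proof for $\mathrm{Zero}_{ij}$ (four cosets of the unipotent lines, encoded via $(SR)^t$ and $(SR^2)^t$ with sign and conjugation corrections) checks out, though it amounts to reproving the case $k=0$ of Theorem~\ref{thm:val}, which the paper imports from prior work, so you could have invoked $\mathrm{Zero}_{ij}=\mathrm{S}_{ij}(0)$ directly. In short: same main idea, but your version is the more careful one at the zero-entry boundary, at the cost of redoing a known regularity fact.
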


\begin{proof}
  Suppose $i=j=2$ as other cases are similar. Let $A$ be a matrix from $\GL$ and let
  \[
    w=N^\delta X^\gamma S^\beta R^{\alpha_1}SR^{\alpha_2}\cdots SR^{\alpha_n}S^\epsilon,
  \]
  where $\beta,\gamma, \delta, \epsilon\in \{0,1\}$ and $\alpha_i\in \{1,2\}$
  for $i=1,\ldots,n$, be a canonical word that represents $A$, that is,
  $A=\phi(w)$. From Proposition \ref{pr:3} we see that $\sg(a_{22}) \simeq
  {(-1)}^{n-1+\beta+\gamma+\delta+\epsilon}$. Hence
  \begin{equation}
    \label{eq:eqv}
    a_{22}\geq 0\quad \text{if and only if}\quad n-1+\beta+\gamma+\delta+\epsilon \equiv 0 \pmod 2.
  \end{equation}
  
  To finish the proof, we note that the set of all canonical words is regular. Furthermore, given a canonical word of the form $w=N^\delta X^\gamma S^\beta R^{\alpha_1}SR^{\alpha_2}\cdots SR^{\alpha_n}S^\epsilon$, a finite automaton can read off the values of $\beta,\gamma,\delta,\epsilon$ and determine the parity of number $n$. From this data an automaton can decide whether $a_{22}\geq 0$ by the above mentioned equivalence (\ref{eq:eqv}). Hence the set of canonical words $w$ such that $\phi(w)\in \mathrm{Pos}_{22}$ can be recognised by a finite automaton.
\end{proof}

Next theorem was proved in \cite{PS17}.

\begin{theorem} \label{thm:val}
  For every $k\in\Z$, the following subset of $\GL$ is regular:
  \[
    \mathrm{S}_{ij}(k) = \left\{\begin{pmatrix}a_{11} & a_{12}\\a_{21}  & a_{22}\end{pmatrix}\in \GL\ :\ a_{ij} = k \right\}.
  \]
\end{theorem}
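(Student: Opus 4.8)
The plan is to prove this in the style of Theorem~\ref{thm:pos}, by exhibiting a finite automaton accepting exactly the canonical words whose image under $\phi$ lies in $\mathrm{S}_{ij}(k)$. First I would reduce to a single normalized case. Note that if $\s\sub\GL$ is regular, say $\s=\phi(L)$ with $L$ regular, then $\phi(v_1)\,\s\,\phi(v_2)=\phi(v_1Lv_2)$ is again regular for any fixed words $v_1,v_2$. Left‑ and right‑multiplication by $\phi(S)=\begin{pmatrix}0&-1\\1&0\end{pmatrix}$ permute the rows, respectively the columns, up to sign, so each $\mathrm{S}_{ij}(k)$ equals $\phi(w_1)\,\mathrm{S}_{22}(k')\,\phi(w_2)$ for fixed words $w_1,w_2$ and $k'\in\{k,-k\}$; since $\phi(X)=-I_2$ we have $\mathrm{S}_{22}(k)=\phi(X)\,\mathrm{S}_{22}(-k)$, so we may assume $k\ge 0$; and $\mathrm{S}_{22}(0)=\mathrm{Pos}_{22}\cap\phi(X)\,\mathrm{Pos}_{22}$ is regular by Theorem~\ref{thm:pos} and Corollary~\ref{thm:bool}. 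Thus it suffices to recognize the canonical words $w$ with $(\phi(w))_{22}=k$ for a fixed integer $k\ge 1$.

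Next I would pin down how $(\phi(w))_{22}$ behaves along a canonical word $w=N^\delta X^\gamma S^\beta R^{\alpha_1}SR^{\alpha_2}\cdots SR^{\alpha_n}S^\epsilon$. Using $\phi(SR)=-\begin{pmatrix}1&1\\0&1\end{pmatrix}$ and $\phi(SR^2)=-\begin{pmatrix}1&0\\1&1\end{pmatrix}$ from the proof of Proposition~\ref{pr:1}, together with the identity $\phi(R^{\alpha_1}SR^{\alpha_2}\cdots)=-\phi(S)\,\phi(SR^{\alpha_1}SR^{\alpha_2}\cdots)$ from the proof of Proposition~\ref{pr:2}, one obtains for $n\ge 1$ a factorisation
\[
  \phi(w)=\pm\,Q_1\,P\,Q_2,\qquad\text{where}\quad P:=\prod_{j=1}^{n}U_j,
\]
with $U_j\in\bigl\{\begin{pmatrix}1&1\\0&1\end{pmatrix},\begin{pmatrix}1&0\\1&1\end{pmatrix}\bigr\}$ according to whether $\alpha_j$ is $1$ or $2$, and $Q_1$ (built from $\phi(N)^\delta\phi(S)^{1-\beta}$) and $Q_2=\phi(S)^\epsilon$ signed permutation matrices determined by $\delta,\beta,\epsilon$. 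Hence $\bigl|(\phi(w))_{22}\bigr|=P_{i^*j^*}$ for a position $(i^*,j^*)\in\{1,2\}^2$ depending only on $\beta,\epsilon$, while $\sg((\phi(w))_{22})$ is exactly the parity expression of Proposition~\ref{pr:3}. The key structural fact is that right‑multiplying a nonnegative matrix by either $U_j$ replaces one column by the sum of its two columns; so every entry of the partial product $U_1\cdots U_j$ is nonnegative and nondecreasing in $j$, evolving by the additive (``slow Euclidean'') recursion $(x,y)\mapsto(x,x+y)$ or $(x,y)\mapsto(x+y,y)$ on its columns. (The finitely many canonical words with $n=0$ yield only signed permutation matrices and are dealt with directly.)

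Given this, the automaton is routine. It reads the canonical form, recording $\delta,\gamma,\beta,\epsilon$, the parity of the number $n$ of maximal $R$‑runs, and the four entries of the current partial product $U_1\cdots U_j$ \emph{truncated at $k$}: each is stored as its exact value in $\{0,1,\dots,k\}$ while that value is $\le k$, and as a single absorbing symbol once it exceeds $k$. This truncation is consistent because each entry is updated by adding two of the current nonnegative entries, and because the entries only increase, so a value that has reached $k$ either stays at $k$ or moves past $k$ forever, never back below. At the end of the word the automaton computes $(i^*,j^*)$ from $\beta,\epsilon$, checks that the stored value of entry $(i^*,j^*)$ equals $k$ (in particular has not overflowed), and checks that the parity formula of Proposition~\ref{pr:3} makes the sign positive; it accepts iff both hold. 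Since it accepts only words of canonical shape, the $\phi$‑image of its language is exactly $\mathrm{S}_{22}(k)$.

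The step I expect to be the real content is the monotonicity / no‑cancellation claim: that along a canonical word the relevant entries genuinely grow, with additions never producing cancellation, so that truncating each entry at $k$ loses no information about whether it can ever \emph{equal} $k$. This is the sign‑coherence phenomenon already exploited in the proofs of Propositions~\ref{pr:1}--\ref{pr:3}, now tracked at the level of magnitudes rather than signs; once it is established, checking that the truncated‑entry update is well defined and matching the position $(i^*,j^*)$ and the overall sign to $\delta,\gamma,\beta,\epsilon$ and the parity of $n$ is just a finite case analysis. (The same idea, truncating one entry at $k$ on a single side, also shows that $\{A\in\GL:a_{ij}\ge k\}$ is regular.)
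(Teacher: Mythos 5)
For this particular statement the paper offers no proof at all: Theorem~\ref{thm:val} is imported verbatim from~\cite{PS17}, so there is no in-paper argument to compare yours against. Your proposal is a correct, self-contained proof built from the same canonical-word machinery the paper does develop (Proposition~\ref{prop:can}, Corollary~\ref{thm:bool}, Propositions~\ref{pr:1}--\ref{pr:3}). The reduction to $\mathrm{S}_{22}(k)$ with $k\ge 1$ via fixed left/right factors and $\phi(X)$ is sound, since $\phi(v_1Lv_2)$ is regular and $\phi(S)$ permutes rows/columns up to sign; the factorisation $\phi(w)=\pm Q_1(U_1\cdots U_n)Q_2$ with $U_j\in\bigl\{\left(\begin{smallmatrix}1&1\\0&1\end{smallmatrix}\right),\left(\begin{smallmatrix}1&0\\1&1\end{smallmatrix}\right)\bigr\}$ follows exactly as in the proofs of Propositions~\ref{pr:1} and~\ref{pr:2}, so $|a_{22}|$ is an entry of the nonnegative product $U_1\cdots U_n$ and the sign is recovered from the parity data of Proposition~\ref{pr:3}. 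The step you flag as the real content is indeed fine and needs only nonnegativity, not even monotonicity: since every entry of a partial product is $\ge 0$ and each update adds one column to the other, the truncated-at-$k$ abstraction is a well-defined transition function (a summand already exceeding $k$ forces the sum to exceed $k$), so the automaton with $(k+2)^4$-bounded entry states plus the $\beta,\gamma,\delta,\epsilon$ and parity-of-$n$ bookkeeping recognises exactly the canonical words with $a_{22}=k$; the finitely many $n=0$ words are handled separately as you say. In effect you have reproved the cited result of~\cite{PS17} by a truncation argument, which also directly yields Theorem~\ref{thm:bound} without passing through the Boolean combination $\mathrm{Pos}_{ij}\setminus\bigcup_n \mathrm{S}_{ij}(n)$ used in the paper; what the paper buys by citing~\cite{PS17} is brevity, while your route keeps Section~\ref{sec:gl} self-contained.
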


As a corollary from Theorems \ref{thm:pos} and \ref{thm:val} we obtain:

\begin{theorem} \label{thm:bound}
  For every $k\in\Z$, the following subsets of $\GL$ are regular:
  \begin{align*}
    \mathrm{S}_{ij}(\geq\! k) &= \left\{\begin{pmatrix}a_{11} & a_{12}\\a_{21}  & a_{22}\end{pmatrix}\in \GL\ :\ a_{ij}\geq k \right\}\quad\text{and}\\
    \mathrm{S}_{ij}(\leq\! k) &= \left\{\begin{pmatrix}a_{11} & a_{12}\\a_{21}  & a_{22}\end{pmatrix}\in \GL\ :\ a_{ij}\leq k \right\}.
  \end{align*}
\end{theorem}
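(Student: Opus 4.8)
The plan is to write each of the sets $\mathrm{S}_{ij}(\geq\! k)$ and $\mathrm{S}_{ij}(\leq\! k)$ as a Boolean combination of the set $\mathrm{Pos}_{ij}$, which is regular by Theorem \ref{thm:pos}, together with finitely many sets of the form $\mathrm{S}_{ij}(m)$, each of which is regular by Theorem \ref{thm:val}. Having done that, regularity follows at once from the effective closure of regular subsets of $\GL$ under Boolean operations (Corollary \ref{thm:bool}), which also yields the corresponding automaton effectively.

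First I would reduce the ``$\leq$'' case to the ``$\geq$'' case: for a matrix $A\in\GL$ we have $a_{ij}\leq k$ if and only if it is \emph{not} the case that $a_{ij}\geq k+1$, so $\mathrm{S}_{ij}(\leq\! k) = \GL\setminus \mathrm{S}_{ij}(\geq\! k+1)$, and complementation preserves regularity. Thus it suffices to handle $\mathrm{S}_{ij}(\geq\! k)$.

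For $\mathrm{S}_{ij}(\geq\! k)$ I would split on the sign of $k$. If $k\leq 0$, then $a_{ij}\geq k$ holds precisely when $a_{ij}\geq 0$ or $a_{ij}\in\{k,k+1,\ldots,-1\}$, so
\[
  \mathrm{S}_{ij}(\geq\! k) = \mathrm{Pos}_{ij} \cup \bigcup_{m=k}^{-1} \mathrm{S}_{ij}(m),
\]
a finite union of regular sets. If $k>0$, then $a_{ij}\geq k$ holds precisely when $a_{ij}\geq 0$ and $a_{ij}\notin\{0,1,\ldots,k-1\}$, so
\[
  \mathrm{S}_{ij}(\geq\! k) = \mathrm{Pos}_{ij} \setminus \bigcup_{m=0}^{k-1} \mathrm{S}_{ij}(m).
\]
In either case $\mathrm{S}_{ij}(\geq\! k)$ is a Boolean combination of finitely many regular subsets of $\GL$, hence regular by Corollary \ref{thm:bool}.

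I do not expect a genuine obstacle here: the argument is a direct assembly of the two preceding theorems via Boolean closure. The only point requiring a little care is the case distinction on the sign of $k$, which is what guarantees that the union over exact-value sets $\mathrm{S}_{ij}(m)$ remains finite; without it one would be tempted to write an infinite union.
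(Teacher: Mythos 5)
Your proposal is correct and follows essentially the same route as the paper: reduce $\mathrm{S}_{ij}(\leq\! k)$ to the complement of $\mathrm{S}_{ij}(\geq\! k+1)$, then express $\mathrm{S}_{ij}(\geq\! k)$ as $\mathrm{Pos}_{ij}$ adjusted by a finite union of exact-value sets $\mathrm{S}_{ij}(m)$ depending on the sign of $k$, and invoke Corollary \ref{thm:bool}. No gaps.
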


\begin{proof}
  Since $\mathrm{S}_{ij}(\leq\! k)$ is the complement of $\mathrm{S}_{ij}(\geq\! k+1)$, it suffices to prove that the sets $\mathrm{S}_{ij}(\geq\! k)$ are regular.

  If $k=0$, then it follows from Theorem \ref{thm:pos} that $\mathrm{S}_{ij}(\geq\! 0) = \mathrm{Pos}_{ij}$ is regular. Furthermore,
  \[
    \mathrm{S}_{ij}(\geq\! k) = \mathrm{Pos}_{ij} \ \setminus\
    \bigcup_{n=0}^{k-1} M_{ij}(n) \text{ if } k>0\quad \text{and}\quad
    \mathrm{S}_{ij}(\geq\! k) = \mathrm{Pos}_{ij}\ \cup\ \bigcup_{n=k}^{-1} M_{ij}(n) \text{ if } k<0.
  \]
  Since by Corollary \ref{thm:bool} regular subsets of $\GL$ are closed under Boolean operations, we conclude that $\mathrm{S}_{ij}(\geq\! k)$ is a regular set for any $k\in \Z$.
\end{proof}

\begin{theorem} \label{thm:half}
  Let $\lambda\in\Q$ and $\boldsymbol{u},\boldsymbol{v}\in \Q\times\Q$. Then the set
  $
    \s(\boldsymbol{u},\boldsymbol{v},\lambda) = \{\, M\in \GL\ :\ \boldsymbol{u}^\top M\boldsymbol{v}\geq \lambda\,\}
  $
  is a regular subset of $\GL$.
\end{theorem}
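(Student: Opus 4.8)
The plan is to reduce the bilinear condition $\boldsymbol u^\top M \boldsymbol v \geq \lambda$ to a constraint on a single entry of a two-sided translate of $M$, and then to appeal to Theorem~\ref{thm:bound} together with the closure of regular subsets of $\GL$ under left and right multiplication by fixed matrices.

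First I would dispose of the degenerate case: if $\boldsymbol u = \boldsymbol 0$ or $\boldsymbol v = \boldsymbol 0$ then $\boldsymbol u^\top M\boldsymbol v = 0$ for every $M$, so $\s(\boldsymbol u,\boldsymbol v,\lambda)$ is $\GL = \phi(\Sigma^*)$ if $\lambda \leq 0$ and $\emptyset = \phi(\emptyset)$ if $\lambda > 0$; both are regular. Assume henceforth $\boldsymbol u,\boldsymbol v \neq \boldsymbol 0$. Clearing denominators and factoring out greatest common divisors, write $\boldsymbol u = r\,\boldsymbol u_0$ and $\boldsymbol v = s\,\boldsymbol v_0$ with $r,s\in\Q_{>0}$ and $\boldsymbol u_0,\boldsymbol v_0$ primitive integer vectors. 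Since a primitive row vector can be completed, via B\'ezout's identity, to the first row of an integer matrix of determinant $\pm 1$ (and dually for columns), there exist $P,Q\in\GL$ whose first row is $\boldsymbol u_0^\top$ and whose first column is $\boldsymbol v_0$, i.e.\ $\boldsymbol u_0^\top P^{-1} = \boldsymbol e_1^\top$ and $Q^{-1}\boldsymbol v_0 = \boldsymbol e_1$. Then for every $M\in\GL$,
\[
  \boldsymbol u^\top M\boldsymbol v \;=\; rs\,\boldsymbol u_0^\top P^{-1}(PMQ)Q^{-1}\boldsymbol v_0 \;=\; rs\,\boldsymbol e_1^\top (PMQ)\,\boldsymbol e_1 \;=\; rs\,(PMQ)_{11}\,.
\]
Since $(PMQ)_{11}\in\Z$ and $rs > 0$, the condition $\boldsymbol u^\top M\boldsymbol v \geq \lambda$ is equivalent to $(PMQ)_{11}\geq k$ with $k=\lceil \lambda/(rs)\rceil$, so $\s(\boldsymbol u,\boldsymbol v,\lambda) = \{\,M\in\GL : PMQ\in \mathrm{S}_{11}(\geq\! k)\,\} = P^{-1}\,\mathrm{S}_{11}(\geq\! k)\,Q^{-1}$.

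To finish I would note that $\mathrm{S}_{11}(\geq\! k)$ is regular by Theorem~\ref{thm:bound}, and that regular subsets of $\GL$ are closed under multiplication by a fixed element on either side: if $\s=\phi(L)$ for a regular language $L$, choose (using surjectivity of $\phi$) words $w_1,w_2\in\Sigma^*$ with $\phi(w_1)=P^{-1}$ and $\phi(w_2)=Q^{-1}$; then $P^{-1}\s Q^{-1} = \phi(w_1 L w_2)$, and $w_1 L w_2 = \{\,w_1 w w_2 : w\in L\,\}$ is regular. Hence $\s(\boldsymbol u,\boldsymbol v,\lambda)$ is regular. The calculations here are routine; the one point that needs to be spotted is that on $\GL$ the rank-one bilinear form $M\mapsto \boldsymbol u^\top M\boldsymbol v$ is, up to a positive scalar and a change of basis afforded by the transitive action of $\GL$ on primitive vectors, nothing but the $(1,1)$-entry function --- after which everything reduces to results already proved.
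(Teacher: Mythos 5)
Your proposal is correct and follows essentially the same route as the paper: both reduce the bilinear condition to a constraint on the $(1,1)$-entry by completing the primitive integer vectors obtained from $\boldsymbol u,\boldsymbol v$ (via B\'ezout) to matrices in $\GL$, then apply Theorem~\ref{thm:bound} and realise the two-sided translate $P^{-1}\,\mathrm{S}_{11}(\geq\! k)\,Q^{-1}$ as $\phi(w_1 L w_2)$ for fixed words $w_1,w_2$. The only cosmetic difference is that the paper phrases the change of basis with matrices $A,B$ whose first columns are $\boldsymbol u,\boldsymbol v$ (so the translate is $A^\top M B$), which is exactly your $P=A^\top$, $Q=B$.
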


\begin{proof}
  Note that if $\boldsymbol{u}=\boldsymbol{0}$ or $\boldsymbol{v}=\boldsymbol{0}$, then $\boldsymbol{u}^\top M\boldsymbol{v}=0$. In this case $\s(\boldsymbol{u},\boldsymbol{v},\lambda)$ equals either the empty set or $\GL$, both of which are regular subsets.
  Hence we will assume that both $\boldsymbol{u} = \begin{pmatrix}u_1\\ u_2\end{pmatrix}$ and $\boldsymbol{v}=\begin{pmatrix}v_1\\ v_2\end{pmatrix}$ are nonzero vectors. By multiplying the inequality $\boldsymbol{u}^\top M\boldsymbol{v}\geq \lambda$ by the least common multiple of the denominators of $u_1,u_2,v_1,v_2$, we can assume that $\boldsymbol{u}$ and $\boldsymbol{v}$ have integer coefficients. Furthermore, we can divide $\boldsymbol{u}^\top M\boldsymbol{v}\geq \lambda$ by $\gcd(u_1,u_2)$ and $\gcd(v_1,v_2)$ and so assume from now on that $\gcd(u_1,u_2)=\gcd(v_1,v_2)=1$.

  Finally, note that the inequality $\boldsymbol{u}^\top M\boldsymbol{v}\geq \lambda$ is equivalent to $\boldsymbol{u}^\top M\boldsymbol{v}\geq \lceil\lambda\rceil$, where $\lceil\lambda\rceil = \min \{n\in \Z\ :\ n\geq \lambda\}$. So, we can assume that $\lambda$ is also an integer number.

  Since $\gcd(u_1,u_2)=\gcd(v_1,v_2)=1$, there are integers $s_1$, $s_2$, $t_1$, $t_2$ such that $s_1u_1+s_2u_2=1$ and $t_1v_1+t_2v_2=1$. Hence the matrices $A=\begin{pmatrix} u_1 & -s_2\\ u_2 & s_1 \end{pmatrix}$ and $B=\begin{pmatrix} v_1 & -t_2\\ v_2 & t_1 \end{pmatrix}$ belong to $\GL$, and we have that $\boldsymbol{u}=A\boldsymbol{e}_1$ and $\boldsymbol{v}=B\boldsymbol{e}_1$. Therefore, the inequality $\boldsymbol{u}^\top M\boldsymbol{v} \geq \lambda$ is equivalent to $\boldsymbol{e}_1^\top A^\top MB\boldsymbol{e}_1 \geq \lambda$. In other words,
  \[
    M\in \s(\boldsymbol{u},\boldsymbol{v},\lambda)\quad \Longleftrightarrow \quad A^\top MB \in \mathrm{S}_{11}(\geq\! \lambda)\quad \Longleftrightarrow \quad M \in {(A^\top)}^{-1}\cdot \mathrm{S}_{11}(\geq\! \lambda)\cdot B^{-1}.
  \]
  By Theorem \ref{thm:bound}, $\mathrm{S}_{11}(\geq\! \lambda)$ is a regular subset of $\GL$. Let $L$ be a regular language and let $w_1$, $w_2$ be canonical words such that $\phi(L)=\mathrm{S}_{11}(\geq\! \lambda)$ and $\phi(w_1)={(A^\top)}^{-1}$ and $\phi(w_2)=B^{-1}$. Then $\{w_1\}\cdot L \cdot \{w_2\}$ is a regular language such that
  \[
    \phi(\{w_1\}\cdot L \cdot \{w_2\}) = {(A^\top)}^{-1}\cdot \mathrm{S}_{11}(\geq\! \lambda)\cdot B^{-1} = \s(\boldsymbol{u},\boldsymbol{v},\lambda).
  \]
\end{proof}

\begin{theorem} \label{thm:halfpl}
  The Half-Space Reachability Problem for $\GL$ is decidable.
\end{theorem}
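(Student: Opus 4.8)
The plan is to reduce the Half-Space Reachability question to an emptiness check for a regular language. Concretely, I would represent both the semigroup $\la\G\ra$ and the target condition ``$\boldsymbol u^\top A\boldsymbol v\ge\lambda$'' as regular subsets of $\GL$ and intersect them.

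First, $\la\G\ra$ is a regular subset of $\GL$, essentially by definition. Writing $\G=\{A_1,\ldots,A_k\}$, choose for each $i$ a word $w_i\in\Sigma^*$ with $\phi(w_i)=A_i$; such $w_i$ exist because $\phi$ is surjective (one may take the canonical word of $A_i$). Then the regular language $L_\G:=\{w_1,\ldots,w_k\}^+$ satisfies $\phi(L_\G)=\{A_{i_1}\cdots A_{i_t} : t\ge 1,\ i_j\in\{1,\ldots,k\}\}=\la\G\ra$, and $L_\G$ is plainly computable from $\G$.

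Second, by Theorem~\ref{thm:half} the set $\s(\boldsymbol u,\boldsymbol v,\lambda)=\{M\in\GL:\boldsymbol u^\top M\boldsymbol v\ge\lambda\}$ is a regular subset of $\GL$, and inspection of its proof shows that a regular language $L'$ with $\phi(L')=\s(\boldsymbol u,\boldsymbol v,\lambda)$ can be constructed effectively (it reduces, via an explicit change of basis obtained from the extended Euclidean algorithm, to the sets $\mathrm{S}_{11}(\geq\! \lambda)$ of Theorem~\ref{thm:bound}). By Corollary~\ref{thm:bool}, regular subsets of $\GL$ are effectively closed under Boolean operations, so I can construct a regular language $L^\cap$ with $\phi(L^\cap)=\la\G\ra\cap\s(\boldsymbol u,\boldsymbol v,\lambda)$. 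There exists $A\in\la\G\ra$ with $\boldsymbol u^\top A\boldsymbol v\ge\lambda$ if and only if $\la\G\ra\cap\s(\boldsymbol u,\boldsymbol v,\lambda)\ne\emptyset$, which, since $\phi$ is a total map, holds if and only if $L^\cap\ne\emptyset$. Emptiness of a regular language is decidable, which finishes the proof.

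There is no real obstacle remaining at this stage: the substance of the argument is already contained in Theorem~\ref{thm:half} (ultimately Theorem~\ref{thm:pos}), whose proof rests on the sign analysis of canonical words in Propositions~\ref{pr:1}--\ref{pr:3} and on the automaton-normalisation result Theorem~\ref{thm:canon} of \cite{PS_SODA} that makes Corollary~\ref{thm:bool} work. Within the proof of Theorem~\ref{thm:halfpl} itself the only points requiring care are that each of the three constructions ($L_\G$, $L'$, and $L^\cap$) is effective, and that the reachability question translates exactly into nonemptiness of $L^\cap$ rather than into some weaker condition.
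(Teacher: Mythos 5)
Your proposal is correct and follows essentially the same route as the paper: represent $\la\G\ra$ via the regular language $\{w_1,\ldots,w_k\}^+$, invoke Theorem~\ref{thm:half} for the regularity of $\s(\boldsymbol u,\boldsymbol v,\lambda)$, intersect using Corollary~\ref{thm:bool}, and reduce to emptiness of a regular language. The only cosmetic difference is that the paper takes the $w_i$ to be canonical words, which your remark about effectiveness already covers.
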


\begin{proof}
  Let $\G=\{ A_1,\dots,A_k\}$ be a finite collection of matrices from $\GL$, $\lambda$ be a rational
  number and $\boldsymbol{u},\boldsymbol{v}$ be vectors from $\Q^2$.
  Define $\s(\boldsymbol{u},\boldsymbol{v},\lambda) := \{\, M\in \GL\ :\ \boldsymbol{u}^\top M\boldsymbol{v}\geq \lambda\,\}$. By Theorem \ref{thm:half}, $\s(\boldsymbol{u},\boldsymbol{v},\lambda)$ is a regular subset of $\GL$. Let $L_\s$ be a regular language such that $\s(\boldsymbol{u},\boldsymbol{v},\lambda) = \phi(L_\s)$. It is not hard to see that the semigroup $\la\G\ra$ is also a regular subset. Indeed, consider a regular language $L_\G={(w_1\cup\cdots\cup w_k)}^+$, where $w_1,\dots,w_k$ are canonical words that correspond to the matrices $A_1,\dots,A_k$, respectively. Then $\la\G\ra=\phi(L_\G)$.

  By Corollary \ref{thm:bool}, we can algorithmically construct a regular language $L^\cap$ such that
  \[
    \phi(L^\cap) = \phi(L_\s) \cap \phi(L_\G) = \s(\boldsymbol{u},\boldsymbol{v},\lambda) \cap \la\G\ra.
  \]
  Now we have the following equivalence:
  \[
    \text{there is } M\in \la\G\ra \text{ such that } \boldsymbol{u}^\top M\boldsymbol{v}\geq \lambda
   \quad \text{if{}f} \quad
    \s(\boldsymbol{u},\boldsymbol{v},\lambda) \cap \la\G\ra = \phi(L^\cap)\neq \emptyset.
  \]
  The last condition is equivalent to $L^\cap\neq\emptyset$. Therefore, we reduced the Half-Space
  Reachability Problem for $\GL$ to the emptiness problem for regular languages.
\end{proof}

\section{The Half-Space Reachability Problem for the Heisenberg
Group}\label{sec:half}

\begin{definition}
Let $\mathcal{S}:=B_1,\ldots,B_m$ be a sequence in $\Heis(n,\mathbb{Q})$ and $A$ a particular matrix
in $\Heis(n,\mathbb{Q})$.
A pair $i,j \in \{1\ldots,m\}$ with $i\leq j$
is called an $A$-\emph{block} of $\mathcal{S}$ if
\begin{enumerate}
\item $B_k=A$ for all $k\in\{i,\ldots,j\}$,
\item either $i=1$ or $B_{i-1} \neq A$,
\item either $j=m$ or $B_{j+1} \neq A$.
\end{enumerate}
We say that $\mathcal{S}$ is \emph{pure} if it has at most one $A$-block for every matrix $A$.
\end{definition}

Given a sequence $\mathcal{S}=B_1,\ldots,B_m \in \Heis(n,\mathbb{Q})$,
define $C_i:=\log(B_i)$ for $i=1,\ldots,m$,
$\Delta(\mathcal{S}) := \sum_{1\leq i<j\leq m} [C_i,C_j]$, and
$\delta(\mathcal{S}) := \Delta(\mathcal{S})_{1,n}$.  Recall that using the
Baker-Campbell-Hausdorff formula \eqref{eq:baker-campbell-hausdorff} we can express the
product of the sequence $\mathcal{S}$ as follows
\begin{gather}
  B_1 \cdots B_m = \exp\Big(\sum_{i=1}^m C_i + \frac{1}{2}
  \underbrace{\sum_{1\leq i<j\leq m}
    [C_i,C_j]}_{\Delta(\mathcal{S})}\Big)
\label{eq:baker-campbell-hausdorff-new}
\end{gather}

\begin{proposition}
For any sequence of matrices $\mathcal{S}=B_1,\ldots,B_m \in
\Heis(n,\mathbb{Q})$, there is a permutation $\pi\in S_m$ such that
sequence $\mathcal{S}':=B_{\pi(1)},\ldots,B_{\pi(m)}$ is pure and 
$\delta(\mathcal{S}) \leq \delta(\mathcal{S}')$.
\label{prop:blocks}
\end{proposition}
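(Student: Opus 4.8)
The plan is to rewrite $\delta(\mathcal{S})$ as a sum of values of a single antisymmetric bilinear form and then repeatedly merge the blocks of each matrix, at every step not decreasing $\delta$. Write $\psi(B_i)=(\boldsymbol a_i,\boldsymbol b_i)\in\mathbb{Q}^{n-2}\times\mathbb{Q}^{n-2}$. Since $\log(\boldsymbol a,\boldsymbol b,c)=(\boldsymbol a,\boldsymbol b,c-\tfrac12\boldsymbol a^\top\boldsymbol b)$, the matrix $C_i=\log(B_i)$ has the same top-row/right-column data $(\boldsymbol a_i,\boldsymbol b_i)$ as $B_i$, and a direct computation gives $[C_i,C_j]_{1,n}=\boldsymbol a_i^\top\boldsymbol b_j-\boldsymbol a_j^\top\boldsymbol b_i$. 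Thus, defining the bilinear form $\Omega(\boldsymbol x,\boldsymbol y):=\boldsymbol x_a^\top\boldsymbol y_b-\boldsymbol y_a^\top\boldsymbol x_b$ on $\mathbb{Q}^{n-2}\times\mathbb{Q}^{n-2}$ (where $\boldsymbol x=(\boldsymbol x_a,\boldsymbol x_b)$, etc.), which is antisymmetric and in particular satisfies $\Omega(\boldsymbol x,\boldsymbol x)=0$, we have $\delta(\mathcal{S})=\sum_{1\le i<j\le m}\Omega(\psi(B_i),\psi(B_j))$, a quantity that depends only on the word $\psi(B_1),\ldots,\psi(B_m)$.

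Next I would record the effect on $\delta$ of relocating one contiguous block. If $\mathcal{T}$ is obtained from $\mathcal{S}$ by removing the block $B_p,\ldots,B_q$ and reinserting it immediately after position $q'$, with $q\le q'\le m$, then the only ordered pairs whose relative order changes are the pairs $(B_k,B_{k'})$ with $p\le k\le q<k'\le q'$, and each such pair changes its contribution to $\delta$ by $-2\,\Omega(\psi(B_k),\psi(B_{k'}))$; hence
\[
\delta(\mathcal{T})-\delta(\mathcal{S})=-2\,\Omega\Big(\textstyle\sum_{k=p}^{q}\psi(B_k),\ \sum_{k=q+1}^{q'}\psi(B_k)\Big).
\]
Symmetrically, relocating the block to immediately before a position $p'\le p$ changes $\delta$ by $+2\,\Omega\big(\sum_{k=p}^{q}\psi(B_k),\ \sum_{k=p'}^{p-1}\psi(B_k)\big)$.

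Now suppose $\mathcal{S}$ is not pure: choose a matrix $A$ having at least two $A$-blocks, and let $[p,q]$ and $[p',q']$ be its first two $A$-blocks, $q<p'$. These being distinct blocks forces $p'\ge q+2$, and $B_{q+1},\ldots,B_{p'-1}$ are all $\neq A$; set $\boldsymbol w:=\sum_{k=q+1}^{p'-1}\psi(B_k)$ and $\boldsymbol a_A:=\psi(A)$. Using $\Omega(\boldsymbol a_A,\boldsymbol a_A)=0$ together with the formulas above, relocating the block $[p,q]$ to immediately after $q'$ changes $\delta$ by $-2(q-p+1)\,\Omega(\boldsymbol a_A,\boldsymbol w)$, while relocating the block $[p',q']$ to immediately before $p$ changes $\delta$ by $+2(q'-p'+1)\,\Omega(\boldsymbol a_A,\boldsymbol w)$; in both cases the two $A$-blocks become a single $A$-block. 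Performing the first move when $\Omega(\boldsymbol a_A,\boldsymbol w)\le 0$ and the second when $\Omega(\boldsymbol a_A,\boldsymbol w)\ge 0$ produces a permutation $\mathcal{T}$ of $\mathcal{S}$ with $\delta(\mathcal{S})\le\delta(\mathcal{T})$. Moreover each such move only excises a single maximal run of $A$'s and reinserts it adjacent to another run of $A$'s, so it splits no run of any matrix $\neq A$ and decreases the number of $A$-blocks by one; hence the total number of blocks (over all matrices) strictly decreases. Since this count is a nonnegative integer, iterating terminates at a pure permutation $\mathcal{S}'$ of $\mathcal{S}$ with $\delta(\mathcal{S})\le\delta(\mathcal{S}')$.

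The hard part is purely the combinatorial bookkeeping: checking that a block relocation reverses \emph{exactly} the claimed set of ordered pairs (so that $\Delta\delta$ has the stated closed form), and checking that moving one run of $A$'s next to another run of $A$'s never splits a run of a different matrix (so the block count genuinely drops and the iteration is well founded). Everything else is bilinear algebra and the identity $\Omega(\boldsymbol a_A,\boldsymbol a_A)=0$.
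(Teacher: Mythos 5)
Your argument is correct and is essentially the paper's own proof: you merge two blocks of the same matrix $A$ by relocating one of them adjacent to the other, choosing which block to move according to the sign of the intervening sum (your $\Omega(\psi(A),\boldsymbol w)$ is, up to sign, the paper's $\sum_i [C_i,\log A]_{1,n}$), so that $\delta$ never decreases, and you iterate until the sequence is pure. Your explicit bilinear-form bookkeeping and the block-count termination argument are just slightly more detailed versions of the steps the paper carries out directly with commutator entries.
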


\begin{proof}
We show that if $\mathcal{S}$ has at least two $A$-blocks for some $A$
then $\mathcal{S}$ can be permuted to to obtain a new sequence
$\mathcal{S}'$ such that $\mathcal{S}'$ has one fewer $A$-block than
$\mathcal{S}$, $\mathcal{S}'$ has at most as many $B$-blocks as
$\mathcal{S}$ for any $B\neq A$, and $\delta(\mathcal{S}) \leq
\delta(\mathcal{S}')$.

Fix some matrix $A \in \Heis(n,\mathbb{Q})$.
Let $(i_1,j_1)$ and $(i_2,j_2)$ be two distinct $A$-blocks in the
sequence $\mathcal{S}$, with $j_1+1<i_2$.  Then we can write
$\mathcal{S}$ in the form
\[ \mathcal{S}=B_1,\ldots,B_{i_1-1},\underbrace{B_{i_1},\ldots,B_{j_1}}_{\text{$A$-block}},B_{j_1+1},\ldots,B_{i_2-1},
\underbrace{B_{i_2},\ldots,B_{j_2}}_{\text{$A$-block}},B_{j_2+1},\ldots,B_m \, . \]

Write $C_i:=\log(B_i)$ for $i=1,\ldots,m$ and $C:=\log(A)$.
We now consider two cases according to the sign of
$\sum_{i=j_1+1}^{i_2-1}[C_i,C]_{1,n}$.

First, suppose that $\sum_{i=j_1+1}^{i_2-1}[C_i,C]_{1,n} \leq 0$.
Then we define
\[
\mathcal{S}':=
B_1,\ldots,B_{i_1-1},\underbrace{B_{i_1},\ldots,B_{j_1},B_{i_2},\ldots,B_{j_2}}_{\text{$A$-block}},
B_{j_1+1},\ldots,B_{i_2-1},
B_{j_2+1},\ldots,B_m
\]
to be the sequence obtained from $\mathcal{S}$ by swapping the order
of the $A$-block $B_{i_2},\ldots,B_{j_2}$ and the preceding
subsequence $B_{j_1+1},\ldots,B_{i_2-1}$.  Notice that $\mathcal{S}'$
has one fewer $A$-block than $\mathcal{S}$ and no more $B$-blocks for
any $B\neq A$.  Moreover, 
\[
\delta(\mathcal{S}) - \delta(\mathcal{S}') =
2(j_2-i_2+1)\sum_{i=j_1+1}^{i_2-1}[C_i,C]_{1,n} \leq 0 \quad \text{and hence}
\quad \delta(\mathcal{S}) \leq \delta(\mathcal{S}').
\]

In the case that $\sum_{i=j_1+1}^{i_2-1}[C_i,C]_{1,n} > 0$ we define
$\mathcal{S}'$ by swapping order of the $A$-block $B_{i_1},\ldots,B_{j_1}$
with the following sequence $B_{j_1+1},\ldots,B_{i_2-1}$.  Then we have
\[
\delta(\mathcal{S}) - \delta(\mathcal{S}') =
-2(j_1-i_1+1)\sum_{i=j_1+1}^{i_2-1}[C_i,C]_{1,n} < 0 \quad \text{and hence}
\quad \delta(\mathcal{S}) < \delta(\mathcal{S}').
\]
\end{proof}

\begin{theorem}\label{thm:heis_half}
The Half-Space Reachability Problem for $\Heis(n,\mathbb{Q})$ is decidable.
\end{theorem}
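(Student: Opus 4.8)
The plan is to express the quantity $\boldsymbol u^\top A\boldsymbol v$, for $A$ ranging over $\la\G\ra$, through finitely many quadratic polynomials in the exponents of the generators, and then to settle the threshold question using the decidability result on nonnegativity of quadratic forms over $\Z$ from~\cite{GS81,SG04}.

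First I would record the elementary facts that are needed. Writing $\boldsymbol u=(u_1,\ldots,u_n)^\top$ and $\boldsymbol v=(v_1,\ldots,v_n)^\top$, a direct computation using the form~\eqref{eq:heisenberg} shows that for $A=(\boldsymbol a,\boldsymbol b,c)\in\Heis(n,\Q)$ the value $\boldsymbol u^\top A\boldsymbol v$ is an affine function of $(\boldsymbol a,\boldsymbol b,c)$ in which the coefficient of $c$ is precisely $u_1v_n$. Next, for a product $B_1\cdots B_m$ of generators, let $\boldsymbol n=(n_1,\ldots,n_k)$ record the number of occurrences of each $A_\ell$. Since $\psi$ is a homomorphism into an abelian group, $\psi(B_1\cdots B_m)=\sum_\ell n_\ell\,\psi(A_\ell)$ depends only on $\boldsymbol n$; and by the Baker--Campbell--Hausdorff formula~\eqref{eq:baker-campbell-hausdorff-new} the remaining coordinate $c$ of $B_1\cdots B_m$ has the form $h(\boldsymbol n)+\tfrac12\delta(\mathcal{S})$, where $\mathcal{S}=B_1,\ldots,B_m$ and $h$ is a fixed quadratic polynomial. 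Combining these, $\boldsymbol u^\top(B_1\cdots B_m)\boldsymbol v=g(\boldsymbol n)+\tfrac{u_1v_n}{2}\,\delta(\mathcal{S})$ for a fixed quadratic polynomial $g$ with rational coefficients.

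The crucial step is to bring in Proposition~\ref{prop:blocks}. If $u_1v_n\geq 0$, then permuting $\mathcal{S}$ into a pure sequence as in Proposition~\ref{prop:blocks} does not change $\boldsymbol n$ and does not decrease $\delta$, hence does not decrease $\boldsymbol u^\top(B_1\cdots B_m)\boldsymbol v$. If $u_1v_n<0$, one argues similarly, using that reversal preserves purity and negates $\delta$: applying Proposition~\ref{prop:blocks} to the reversed sequence and then reversing back produces a pure sequence with the same $\boldsymbol n$ and with $\delta$ no larger, hence with value no smaller. Therefore, if any product lies in the target half-space then some \emph{pure} product does. A pure sequence with occurrence vector $\boldsymbol n$ has the form $A_{\pi(1)}^{\,n_{\pi(1)}}\cdots A_{\pi(k)}^{\,n_{\pi(k)}}$ for some permutation $\pi\in S_k$ (reading $A_\ell^{\,0}$ as the identity); so, setting
\[
Q_\pi(\boldsymbol n):=\boldsymbol u^\top\big(A_{\pi(1)}^{\,n_{\pi(1)}}\cdots A_{\pi(k)}^{\,n_{\pi(k)}}\big)\boldsymbol v,
\]
which is a quadratic polynomial in $\boldsymbol n$ with rational coefficients (by the formula above, or simply because each entry of such a product is a polynomial of degree at most two in $\boldsymbol n$), we obtain
\[
\big(\exists A\in\la\G\ra:\ \boldsymbol u^\top A\boldsymbol v\geq\lambda\big)
\quad\Longleftrightarrow\quad
\big(\exists\,\pi\in S_k\ \exists\,\boldsymbol n\in\Z_{\geq 0}^{k}:\ \textstyle\sum_\ell n_\ell\geq 1\ \text{and}\ Q_\pi(\boldsymbol n)\geq\lambda\big).
\]

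It remains to decide the right-hand side. There are only finitely many $\pi\in S_k$, and for a fixed $\pi$ the constraint $\sum_\ell n_\ell\geq 1$ is eliminated by a further finite case split (for each $\ell_0$ impose $n_{\ell_0}\geq 1$ and substitute $n_{\ell_0}\mapsto n_{\ell_0}+1$); after clearing denominators the problem reduces to deciding, for an integer quadratic polynomial $P$ derived from $Q_\pi$, whether $P$ attains a nonnegative value at a point of $\Z_{\geq 0}^{k}$. This is exactly the type of question settled by the results of~\cite{GS81,SG04} on nonnegativity of integer quadratic forms, and this number-theoretic input is where the difficulty lies: the structural reduction above is routine once one notices that $\boldsymbol u^\top A\boldsymbol v$ is quadratic in the exponents of a pure product, but without the decidability of quadratic Diophantine inequalities over the naturals the argument would not close.
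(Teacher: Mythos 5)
Your proposal is correct and follows essentially the same route as the paper: reduce to pure sequences via Proposition~\ref{prop:blocks} (with the sign of $u_1v_n$ dictating whether $\delta$ should be maximised or minimised), express $\boldsymbol u^\top A_{\pi(1)}^{n_{\pi(1)}}\cdots A_{\pi(k)}^{n_{\pi(k)}}\boldsymbol v$ as a quadratic polynomial in the exponents for each of the finitely many permutations, and invoke the Grunewald--Segal decidability result for quadratic Diophantine constraints. Your explicit treatment of the $u_1v_n<0$ case by reversal and of the non-empty-product constraint merely fills in details the paper leaves implicit.
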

\begin{proof}
Consider an instance of the Half-Space Reachability Problem, 
given by a finite set $\mathcal{G}=\{A_1,\ldots,A_k\} \subseteq
\Heis(n,\mathbb{Q})$ of generators, vectors $\boldsymbol u,\boldsymbol
v \in \mathbb{Q}^n$ and a scalar $\lambda\in \Q$.

Given a sequence $\mathcal{S}=B_1,\ldots,B_m$ of elements of
$\mathcal{G}$ and a permutation $\sigma\in \mathrm{Sym}_m$, define
$\mathcal{S}_\sigma = B_{\sigma(1)},\ldots,B_{\sigma(m)}$.  It follows from
Equation~\eqref{eq:baker-campbell-hausdorff-new} that the entries of the product
$B_{\sigma(1)}\cdots B_{\sigma(m)}$ do not depend on the choice of $\sigma\in
\mathrm{Sym}_m$, except for the $(1,n)$-entry which is equal to
$\frac{1}{2}\Delta(\mathcal{S}_\sigma)_{1,n}$ plus a constant that also does not
depend on $\sigma$. So, the permutation $\sigma$ that maximises
$\boldsymbol u^\top B_{\sigma(1)}\cdots B_{\sigma(m)} \boldsymbol{v}$
is the same which maximises or minimises $\Delta(\mathcal{S}_\sigma)_{1,n}$
depending on the sign of the coefficient at $\Delta(\mathcal{S}_\sigma)_{1,n}$
in the expression $\boldsymbol{u}^\top \Delta(\mathcal{S}_\sigma)
\boldsymbol{v}$, namely, on the sign of $\boldsymbol{u}_1\boldsymbol{v}_n$. By
Proposition~\ref{prop:blocks} we may assume without loss of generality that the
optimal permutation $\sigma$ is such that $\mathcal{S}_\sigma$ is pure.

By the reasoning above, to decide the given instance of the Half-Space
Reachability Problem it suffices to restrict attention to pure
sequences of generators.  Equivalently we must decide whether there
exist nonnegative integers $n_1,\ldots,n_k$ and a permutation
$\sigma \in \mathrm{Sym}_k$ such that
$\boldsymbol u^\top A^{n_1}_{\sigma(1)} \cdots A^{n_k}_{\sigma(k)}
\boldsymbol{v} \geq \lambda$.  Write $C_i=\log A_i$ for $i=1,\ldots,k$.  Then
\begin{eqnarray*}
\boldsymbol u^\top A^{n_1}_{\sigma(1)} \cdots A^{n_k}_{\sigma(k)} \boldsymbol{v}&=&
\boldsymbol u^\top \exp \left( \sum_{i=1}^k n_i C_{\sigma(i)} +
\frac{1}{2} \sum_{i<j} n_in_j [C_{\sigma(i)},C_{\sigma(j)}] \right) \boldsymbol v \\
&=& Q(n_1,\ldots,n_k) 
\end{eqnarray*}
for some quadratic polynomial $Q(x_1,\ldots,x_k)$ with rational coefficients.

In the work of Grunewald and Segal \cite{SG04} an algorithm is given for solving the following
problem: does there exist integers $n_1,\ldots,n_k$ that satisfy a given quadratic equation
$Q(n_1,\ldots,n_k)=0$ (with rational coefficients) and a finite number of linear inequalities on
$n_1,\ldots,n_k$ (also with rational coefficients).

By introducing a ``dummy'' variable we can use the Grunewald and Segal algorithm to decide whether
$Q(n_1,\ldots,n_k)\geq \lambda$ for some nonnegative integers $n_1,\ldots,n_k$. Hence the Half-Space
Reachability Problem for $\Heis(n,\mathbb{Q})$ is decidable.
\end{proof}

\bibliography{refs}

\begin{thebibliography}{10}

\bibitem{Babai85}
L{\'{a}}szl{\'{o}} Babai.
\newblock Trading group theory for randomness.
\newblock In {\em Proceedings of the 17th Annual {ACM} Symposium on Theory of
  Computing, May 6-8, 1985, Providence, Rhode Island, {USA}}, pages 421--429,
  1985.

\bibitem{Babai}
L\'{a}szl\'{o} Babai, Robert Beals, Jin-yi Cai, G\'{a}bor Ivanyos, and
  Eugene~M. Luks.
\newblock Multiplicative equations over commuting matrices.
\newblock In {\em Proceedings of the Seventh Annual ACM-SIAM Symposium on
  Discrete Algorithms}, SODA '96, pages 498--507, Philadelphia, PA, USA, 1996.
  Society for Industrial and Applied Mathematics.

\bibitem{Beals99}
Robert Beals.
\newblock Algorithms for matrix groups and the {T}its alternative.
\newblock {\em J. Comput. Syst. Sci.}, 58(2):260--279, 1999.

\bibitem{BHP17}
Paul Bell, Mika Hirvensalo, and Igor Potapov.
\newblock The identity problem for matrix semigroups in
  $\mathrm{SL}(2,\mathbb{Z})$ is {NP}-complete.
\newblock In {\em Proceedings of the 28th Annual ACM-SIAM Symposium on Discrete
  Algorithms, {SODA}}, 2017.

\bibitem{Identity}
Paul~C. Bell and Igor Potapov.
\newblock On the undecidability of the identity correspondence problem and its
  applications for word and matrix semigroups.
\newblock {\em Int. J. Found. Comput. Sci.}, 21(6):963--978, 2010.

\bibitem{BP2012}
Paul~C. Bell and Igor Potapov.
\newblock On the computational complexity of matrix semigroup problems.
\newblock {\em Fundam. Inf.}, 116(1-4):1--13, 2012.

\bibitem{BlondelJKP05}
V.~Blondel, E.~Jeandel, P.~Koiran, and N.~Portier.
\newblock Decidable and undecidable problems about quantum automata.
\newblock {\em SIAM J. Comput.}, 34(6):1464--1473, 2005.

\bibitem{boyd2004}
Stephen Boyd and Lieven Vandenberghe.
\newblock {\em Convex optimization}.
\newblock Cambridge university press, 2004.

\bibitem{CassaigneHHN14}
Julien Cassaigne, Vesa Halava, Tero Harju, and Fran{\c{c}}ois Nicolas.
\newblock Tighter undecidability bounds for matrix mortality,
  zero-in-the-corner problems, and more.
\newblock {\em CoRR}, abs/1404.0644, 2014.
\newblock URL: \url{http://arxiv.org/abs/1404.0644}, \href
  {http://arxiv.org/abs/1404.0644} {\path{arXiv:1404.0644}}.

\bibitem{CK2005}
Christian Choffrut and Juhani Karhum\"aki.
\newblock Some decision problems on integer matrices.
\newblock {\em RAIRO-Theor. Inf. Appl.}, 39(1):125--131, 2005.

\bibitem{DerksenJK05}
H.~Derksen, E.~Jeandel, and P.~Koiran.
\newblock Quantum automata and algebraic groups.
\newblock {\em J. Symb. Comput.}, 39(3-4):357--371, 2005.

\bibitem{GS81}
Fritz~J. Grunewald and Daniel Segal.
\newblock How to solve a quadratic equation in integers.
\newblock {\em Mathematical Proceedings of the Cambridge Philosophical
  Society}, 89(1):1–5, 1981.

\bibitem{SG04}
Fritz~J. Grunewald and Daniel Segal.
\newblock On the integer solutions of quadratic equations.
\newblock {\em J. Reine Angew. Math.}, 569:13--45, 2004.

\bibitem{Gurevich2007}
Yuri Gurevich and Paul Schupp.
\newblock Membership problem for the modular group.
\newblock {\em SIAM J. Comput.}, 37(2):425--459, May 2007.

\bibitem{HH07}
V.~Halava and M.~Hirvensalo.
\newblock Improved matrix pair undecidability results.
\newblock {\em Acta Informatica}, 44(3-4):191--205, 2007.

\bibitem{HHH2007}
Vesa Halava, Tero Harju, and Mika Hirvensalo.
\newblock Undecidability bounds for integer matrices using {C}laus instances.
\newblock {\em Int. J. Found. Comput. Sci.}, 18(5):931--948, 2007.

\bibitem{hall2015}
B.~Hall.
\newblock {\em Lie Groups, Lie Algebras, and Representations: An Elementary
  Introduction}, volume 222 of {\em Graduate Texts in Mathematics}.
\newblock Springer International Publishing, 2015.

\bibitem{HrushovskiOP018}
Ehud Hrushovski, Jo{\"{e}}l Ouaknine, Amaury Pouly, and James Worrell.
\newblock Polynomial invariants for affine programs.
\newblock In {\em Proceedings of the 33rd Annual {ACM/IEEE} Symposium on Logic
  in Computer Science, {LICS} 2018, Oxford, UK, July 09-12, 2018}, pages
  530--539, 2018.

\bibitem{KL86}
Ravindran Kannan and Richard~J. Lipton.
\newblock Polynomial-time algorithm for the orbit problem.
\newblock {\em J. {ACM}}, 33(4):808--821, 1986.

\bibitem{KP02}
Felix Klaedtke and Harald Rue{\ss}.
\newblock Parikh automata and monadic second-order logics with linear
  cardinality constraints.
\newblock Technical report, Albert-Ludwigs-Universit\"at Freiburg, 2002.

\bibitem{KP03}
Felix Klaedtke and Harald Rue{\ss}.
\newblock Monadic second-order logics with cardinalities.
\newblock In {\em Automata, Languages and Programming, 30th International
  Colloquium, {ICALP}}, pages 681--696, 2003.

\bibitem{KNP18}
Sang{-}Ki Ko, Reino Niskanen, and Igor Potapov.
\newblock On the identity problem for the special linear group and the
  {H}eisenberg group.
\newblock In {\em 45th International Colloquium on Automata, Languages, and
  Programming, {ICALP}}, pages 132:1--132:15, 2018.

\bibitem{KLZ15}
Daniel K{\"{o}}nig, Markus Lohrey, and Georg Zetzsche.
\newblock Knapsack and subset sum problems in nilpotent, polycyclic, and
  co-context-free groups.
\newblock {\em CoRR}, abs/1507.05145, 2015.
\newblock URL: \url{http://arxiv.org/abs/1507.05145}, \href
  {http://arxiv.org/abs/1507.05145} {\path{arXiv:1507.05145}}.

\bibitem{LS}
Roger~C. Lyndon and Paul~E. Schupp.
\newblock {\em Combinatorial group theory}.
\newblock Springer-Verlag, Berlin-New York, 1977.
\newblock Ergebnisse der Mathematik und ihrer Grenzgebiete, Band 89.

\bibitem{MKS}
Wilhelm Magnus, Abraham Karrass, and Donald Solitar.
\newblock {\em Combinatorial group theory}.
\newblock Dover Publications, Inc., New York, revised edition, 1976.

\bibitem{MandelS77}
A.~Mandel and I.~Simon.
\newblock On finite semigroups of matrices.
\newblock {\em Theor. Comput. Sci.}, 5(2):101--111, 1977.

\bibitem{Markov}
A.~Markov.
\newblock On certain insoluble problems concerning matrices.
\newblock {\em Doklady Akad. Nauk SSSR}, 57(6):539--542, June 1947.

\bibitem{Mih58}
K.~A. Mihailova.
\newblock The occurrence problem for a direct product of groups.
\newblock {\em Dokl.\ Akad.\ Nauk}, 119:1103--1105, 1958.

\bibitem{OW14:SODA}
Jo{\"{e}}l Ouaknine and James Worrell.
\newblock Positivity problems for low-order linear recurrence sequences.
\newblock In {\em Proceedings of the Twenty-Fifth Annual {ACM-SIAM} Symposium
  on Discrete Algorithms, {SODA} 2014, Portland, Oregon, USA, January 5-7,
  2014}, pages 366--379, 2014.

\bibitem{Pat70}
Michael~S. Paterson.
\newblock Unsolvability in {$3\times 3$} matrices.
\newblock {\em Studies in Appl. Math.}, 49:105--107, 1970.

\bibitem{PS_SODA}
Igor Potapov and Pavel Semukhin.
\newblock Decidability of the membership problem for $2\times 2$ integer
  matrices.
\newblock In {\em Proceedings of the Twenty-Eighth Annual {ACM-SIAM} Symposium
  on Discrete Algorithms, {SODA}}, pages 170--186, 2017.

\bibitem{PS17}
Igor Potapov and Pavel Semukhin.
\newblock Membership problem in {GL(2, Z)} extended by singular matrices.
\newblock In {\em 42nd International Symposium on Mathematical Foundations of
  Computer Science, {MFCS} 2017, August 21-25, 2017 - Aalborg, Denmark}, pages
  44:1--44:13, 2017.

\bibitem{PS19}
Igor Potapov and Pavel Semukhin.
\newblock Vector and scalar reachability problems in
  $\mathrm{SL}(2,\mathbb{Z})$.
\newblock {\em J. Comput. Syst. Sci.}, 100:30--43, 2019.

\bibitem{Ran}
Robert~A. Rankin.
\newblock {\em Modular forms and functions}.
\newblock Cambridge University Press, Cambridge-New York-Melbourne, 1977.

\bibitem{RS94}
Grzegorz Rozenberg and Arto Salomaa.
\newblock {\em Cornerstones of undecidability}.
\newblock Prentice Hall International Series in Computer Science. Prentice
  Hall, 1994.

\bibitem{Sch86}
Alexander Schrijver.
\newblock {\em Theory of Linear and Integer Programming}.
\newblock John Wiley \& Sons, Inc., New York, NY, USA, 1986.

\bibitem{Sims94}
Charles Sims.
\newblock {\em Computation with Finitely Presented Groups}.
\newblock Cambridge University Press, 1994.

\end{thebibliography}

\end{document}